\DeclareMathOperator{\C}{\mathcal C}
\newcommand{\eps}{\varepsilon}
\newcommand{\complclass}[1]{{\sc #1}\xspace}
\newcommand{\coNP}{\complclass{coNP}}
\newcommand{\PSpace}{\complclass{PSpace}}
\newtheorem{construction}[thm]{Construction}
\edef\endfrontmatter{%
  \unexpanded\expandafter{\endfrontmatter}%
  \noexpand\endNoHyper%
}
\journal{ArXiv}
\begin{document}
\begin{frontmatter}

\title{Verifying Weak and Strong k-Step Opacity\\ in Discrete-Event Systems\thanksref{footnoteinfo}}

\thanks[footnoteinfo]{This work is an extended version of~\cite{BalunMasopust2022wodes} that was presented at WODES 2022. Supported by the Ministy of Education, Youths and Sports under the INTER-EXCELLENCE project LTAUSA19098 and by IGA PrF 2022 018 and 2023 026.
Corresponding author: T. Masopust.}

\author{Ji{\v r}{\' i}~Balun}\ead{jiri.balun01{@}upol.cz} \and
\author{Tom{\' a}{\v s}~Masopust}\ead{tomas.masopust{@}upol.cz}

\address{Faculty of Science, Palacky University Olomouc, Czechia}

\begin{abstract}
  Opacity is an important system-theoretic property expressing whether a system may reveal its secret to a passive observer (an intruder) who knows the structure of the system but has only limited observations of its behavior. Several notions of opacity have been discussed in the literature, including current-state opacity, $k$\nobreakdash-step opacity, and infinite-step opacity.
  We investigate weak and strong $k$\nobreakdash-step opacity, the notions that generalize both current-state opacity and infinite-step opacity, and ask whether the intruder is not able to decide, at any time instant, when respectively whether the system was in a secret state during the last $k$ observable steps.
  We design a new algorithm verifying weak $k$\nobreakdash-step opacity, the complexity of which is lower than the complexity of existing algorithms and does not depend on the parameter~$k$, and show how to use it to verify strong $k$-step opacity by reducing strong $k$\nobreakdash-step opacity to weak $k$\nobreakdash-step opacity. The complexity of the resulting algorithm is again better than the complexity of existing algorithms and does not depend on the parameter $k$.
\end{abstract}

\begin{keyword}
  Discrete event systems, finite automata, opacity, verification, complexity, algorithms.
\end{keyword}

\end{frontmatter}

\section{Introduction}
  Opacity is an information-flow property used to study security and privacy questions of discrete-event systems, communication protocols, and computer systems. Besides security and privacy, an interesting application of opacity is its ability to express other state-estimation properties. In particular, \cite{Lin2011} has shown how to express and verify observability, diagnosability, and detectability in terms of opacity.
  
  Opacity guarantees that a system prevents an intruder from revealing its secret. The intruder is a passive observer that knows the structure of the system but has only limited observations of system's behavior. Intuitively, the intruder estimates the behavior of the system, and the system is opaque if for every secret behavior, there is a nonsecret behavior that looks the same to the intruder. The secret is modeled as a set of secret states or as a set of secret behaviors. Modeling the secret as a set of secret states results in {\em state-based opacity}, introduced by \cite{Bryans2005,BryansKMR08} for Petri nets and transition systems, and adapted to automata by \cite{SabooriHadjicostis2007}. Modeling the secret as a set of secret behaviors results in {\em language-based opacity}, introduced by \cite{Badouel2007} and \cite{Dubreil2008}. See the overview by \cite{JacobLF16} for more details. 

  Many notions of opacity have been discussed in the literature, including initial-state opacity and current-state opacity.
  While initial-state opacity prevents the intruder from revealing, at any time instant, whether the system started in a secret state, current-state opacity prevents the intruder ``only'' from revealing whether the system is currently in a secret state.
  This, however, does not exclude the possibility that the intruder later realizes that the system was in a secret state at a former step of the computation. For instance, if the intruder estimates that the system is in one of two possible states and, in the next step, the system proceeds by an observable event enabled only in one of the states, then the intruder reveals the state in which the system was one step ago.
  
  \cite{SabooriHadjicostis2007,SabooriH12a} addressed this issue and introduce the notion of weak $k$\nobreakdash-step opacity. Weak $k$\nobreakdash-step opacity requires that the intruder is not able to ascertain the secret in the current state and $k$ subsequent observable steps. For $k=0$ and $k=\infty$, weak $k$\nobreakdash-step opacity coincides with current-state opacity and infinite-step opacity, respectively. The notion of infinite-step opacity may be confusing in the context of finite automata, because an $n$\nobreakdash-state automaton is infinite-step opaque if and only if it is $(2^n-2)$\nobreakdash-step opaque, see~\cite{Yin2017}.

  The verification of weak $k$-step opacity has been intensively studied in the literature, resulting in (at least) five different approaches: (1) the secret observer approach with complexity $O(\ell 2^{n(k+3)})$, where $n$ is the number of states and $\ell$ is the number of observable events, (2) the reverse comparison approach with complexity $O((n+m)(k+1)3^n)$, where $m \le \ell n^2$ is the number of transitions in an involved NFA, (3) the state estimator approach of \cite{SabooriH11} with complexity $O(\ell (\ell+1)^k 2^n)$, (4) the two-way observer approach of \cite{Yin2017} with complexity $O(\min\{n2^{2n},n\ell^{k} 2^{n}\})$, already corrected by \cite{Lan2020}, and (5) the projected automaton approach of \cite{BalunMasopust2021} of complexity $O((k+1)2^n (n + m\ell^2))$; see also \cite{Wintenberg2021} for more details on the state complexity and an experimental comparison.
  The reader can see that the complexity of all the algorithms depends on the parameter $k$. A partial exception is the two-way observer algorithm that does not depend on the parameter $k$ if $\ell^k > 2^n$, that is, if $k$ is larger than the number of states divided by the logarithm of the number of observable events.

  In this paper, we design a new algorithm verifying weak $k$\nobreakdash-step opacity, the complexity of which does not depend on the parameter $k$. The state complexity of our algorithm is $O(n2^n)$ and the time complexity is $O((n+m)2^n)$, where $n$ is the number of states of the input automaton and $m\le \ell n^2$ is the number of transitions of the projected input automaton. 
  Hence, our algorithm is faster than all the existing algorithms, with the exception of a very small parameter $k$; namely, if $k$ is smaller than $2\log(n)/\log(\ell)$, where $n$ is the number of states of the input automaton and $\ell$ is its number of observable events, then the algorithms based on the state estimator and on the two-way observer are, in the worst-case, faster than our algorithm.
  
  However, \cite{Falcone2014} have later noticed that even weak $k$\nobreakdash-step opacity may not be as confidential as intuitively expected. In particular, the intruder may realize that the system was in a secret state, although it cannot deduce the exact time when this has happened (see an example in Section~\ref{strongksoAlgo} or in  \cite{Falcone2014} for more details). 
  This problem motivated \cite{Falcone2014} to introduce strong $k$\nobreakdash-step opacity as the notion of $k$\nobreakdash-step opacity with a higher level of confidentiality. The idea is that whereas weak $k$\nobreakdash-step opacity prevents the intruder from revealing the exact time when the system was in a secret state during the last $k$ observable steps, strong $k$\nobreakdash-step opacity prevents the intruder from revealing that the system was in a secret state during the last $k$ observable steps.

  \cite{ma2021} pointed out without proofs that strong and weak $k$\nobreakdash-step opacity are incomparable in the sense that neither strong $k$\nobreakdash-step opacity implies weak $k$\nobreakdash-step opacity nor vice versa. In fact, under the assumption that there are no neutral states (states that are neither secret nor nonsecret), which is assumed in most of the literature, including this paper, strong $k$\nobreakdash-step opacity implies weak $k$\nobreakdash-step opacity (see Appendix~\ref{ma} for more details).

  However, the verification of one type of $k$\nobreakdash-step opacity cannot be directly used for the verification of the other. We show how to do it indirectly. Namely, we construct a polynomial-time transformation of strong $k$-step opacity to weak $k$-step opacity, which makes it possible to verify strong $k$\nobreakdash-step opacity by the algorithms for weak $k$\nobreakdash-step opacity.
  In addition, using our new algorithm verifying weak $k$\nobreakdash-step opacity results in a new algorithm to verify strong $k$\nobreakdash-step opacity with a lower complexity that does not depend on the parameter $k$.

\section{Preliminaries}
  We assume that the reader is familiar with discrete-event systems, see \cite{Lbook} for more details. For a set $S$, $|S|$ denotes the cardinality of $S$, and $2^{S}$ denotes the power set of $S$. An alphabet $\Sigma$ is a finite nonempty set of events. A string over $\Sigma$ is a sequence of events from $\Sigma$; the empty string is denoted by $\varepsilon$. The set of all finite strings over $\Sigma$ is denoted by $\Sigma^*$. A language $L$ over $\Sigma$ is a subset of $\Sigma^*$. 
  For a string $u \in \Sigma^*$, $|u|$ denotes the length of $u$.
  
  A {\em nondeterministic finite automaton\/} (NFA) over an alphabet $\Sigma$ is a structure $G = (Q,\Sigma,\delta,I,F)$, where $Q$ is a finite set of states, $I\subseteq Q$ is a nonempty set of initial states, $F \subseteq Q$ is a set of marked states, and $\delta \colon Q\times\Sigma \to 2^Q$ is a transition function that can be extended to the domain $2^Q\times\Sigma^*$ by induction; we often consider the transition function $\delta$ as the corresponding relation $\delta \subseteq Q\times \Sigma\times Q$. In addition, for a set $S\subseteq \Sigma^*$, we define $\delta(Q,S) = \cup_{s\in S}\,\delta(Q, s)$.
  For a set $Q_0\subseteq Q$, the set $L_m(G,Q_0) = \{w\in \Sigma^* \mid \delta(Q_0,w)\cap F \neq\emptyset\}$ is the language marked by $G$ from the states of $Q_0$, and $L(G,Q_0) = \{w\in \Sigma^* \mid \delta(Q_0,w)\neq\emptyset\}$ is the language generated by $G$ from the states of $Q_0$. The languages {\em marked\/} and {\em generated\/} by $G$ are defined as $L_m(G)=L_m(G,I)$ and $L(G)=L(G,I)$, respectively.
  The NFA $G$ is {\em deterministic\/} (DFA) if $|I|=1$ and $|\delta(q,a)|\le 1$ for every $q\in Q$ and $a \in \Sigma$. In this case, we identify the singleton $I=\{q_0\}$ with its element $q_0$, and simply write $G=(Q,\Sigma,\delta,q_0,F)$ instead of $G=(Q,\Sigma,\delta,\{q_0\},F)$.

  A {\em discrete-event system} (DES) $G$ over $\Sigma$ is an NFA over $\Sigma$ together with the partition of $\Sigma$ into $\Sigma_o$ and $\Sigma_{uo}$ of {\em observable\/} and {\em unobservable events}, respectively. If we want to specify that the DES is modeled by a DFA, we talk about {\em deterministic\/} DES. If the marked states are irrelevant, we omit them and simply write $G=(Q,\Sigma,\delta,I)$.

  The state estimation is modeled by {\em projection\/} $P\colon \Sigma^* \to \Sigma_o^*$, which is a morphism for concatenation defined by $P(a) = \varepsilon$ if $a\in \Sigma_{uo}$, and $P(a)= a$ if $a\in \Sigma_o$. The action of $P$ on a string $a_1a_2\cdots a_n$ is to erase all unobservable events, that is, $P(a_1a_2\cdots a_n)=P(a_1)P(a_2) \cdots P(a_n)$. The definition can be readily extended to languages.

  Let $G$ be a DES over $\Sigma$, and let $P\colon \Sigma^* \to \Sigma_o^*$ be the corresponding projection. The {\em projected automaton\/} of $G$ is the NFA $P(G)$ obtained from $G$ by replacing every transition $(p,a,q)$ by $(p,P(a),q)$, followed by the standard elimination of the $\eps$-transitions. In particular, if $\delta$ is the transition function of $G$, then the transition function $\gamma\colon Q\times \Sigma_o \to 2^Q$ of $P(G)$ is defined as $\gamma(q,a)=\delta(q,P^{-1}(a))$. The projected automaton $P(G)$ is an NFA over $\Sigma_o$ with the same states as $G$ that recognizes the language $P(L_m(G))$ and that can be constructed in polynomial time, see \cite{HopcroftU79}.

  We call the DFA constructed from $P(G)$ by the standard subset construction a {\em full observer\/} of $G$. The accessible part of the full observer of $G$ is called an {\em observer\/} of $G$, cf. \cite{Lbook}. The full observer has exponentially many states compared with $G$. In the worst case, the same holds for the observer as well, see \cite{wong98,JiraskovaM12} for more details.

  For two DESs $G_i=(Q_i,\Sigma,\delta_i,I_i)$, $i=1,2$, over the common alphabet $\Sigma$, the {\em product automaton\/} of $G_1$ and $G_2$ is defined as the DES $G_1\times G_2=(Q_1\times Q_2, \Sigma, \delta, I_1\times I_2)$, where $\delta((q_1,q_2),a)=\delta_1(q_1,a)\times \delta_2(q_2,a)$, for every pair of states $(q_1,q_2)\in Q_1\times Q_2$ and every event $a\in \Sigma$. Notice that the definition does not restrict the state space of the product automaton to its reachable part.

\section{Verification of Weak {\it k}-Step Opacity}\label{ksoAlgo}
  In this section, we recall the definition of weak $k$-step opacity for DESs, and design a new algorithm to verify weak $k$\nobreakdash-step opacity.
  To this end, we denote by $\mathbb{N}_\infty = \mathbb{N}\cup\{\infty\}$ the set of all nonnegative integers together with their limit. For $k\in\mathbb{N}_\infty$, weak $k$\nobreakdash-step opacity asks whether the intruder cannot reveal the secret of a system in the current and $k$ subsequent states.

  \begin{defn}
    Given a DES $G=(Q,\Sigma,\delta,I)$ and $k\in\mathbb{N}_\infty$. System $G$ is {\em weakly $k$\nobreakdash-step opaque ($k$-SO)\/} with respect to the sets $Q_{S}$ of secret and $Q_{NS}$ of nonsecret states and observation $P\colon \Sigma^*\to\Sigma_o^*$ if for every string $st \in L(G)$ with $|P(t)| \leq k$ and $\delta(\delta(I,s)\cap Q_S, t) \neq \emptyset$, there exists a string $s't' \in L(G)$ such that $P(s)=P(s')$, $P(t)=P(t')$, and $\delta (\delta(I,s')\cap Q_{NS}, t') \neq \emptyset$.
  \end{defn}
  
  \begin{algorithm}\hrule height .08em\vspace{-5pt}
    \caption{Verification of weak $k$-step opacity}\label{alg1}
    \begin{algorithmic}[1]
      \vspace{2pt}\hrule\vspace{3pt}
      \Require A DES $G=(Q,\Sigma,\delta,I)$, $Q_S,Q_{NS}\subseteq Q$, $\Sigma_o\subseteq \Sigma$, and $k\in\mathbb{N}_\infty$.

      \Ensure {\tt true} if and only if $G$ is weakly $k$-step opaque with respect to $Q_S$, $Q_{NS}$, and $P\colon \Sigma^*\to\Sigma_o^*$

      \State Set $Y := \emptyset$\label{l1}
      \State Compute the observer $G^{obs}$ of $G$\label{l2}
      \State Compute the projected automaton $P(G)$ of $G$\label{l3}
      \For{every state $X$ of $G^{obs}$}\label{l4}
        \For{every state $x\in X\cap Q_S$}
          \State add state $(x,X \cap Q_{NS})$ to set $Y$\label{line6}
        \EndFor
      \EndFor\label{l8}
      \State Construct $H$ as the part of the full observer of $G$ accessible from the states of the second components of $Y$\label{l9}
      
      \State Compute the product automaton $\C = P(G) \times H$\label{l10}
      
      \State Use the Breadth-First Search (BFS) of Algorithm~\ref{bfsNew} to mark all states of $\C$ reachable from the states of $Y$ in at most $k$ steps\label{l11}
      
      \If {$\C$ contains a marked state of the form $(q,\emptyset)$}
        \State \Return {\tt false}
        \Else
        \State \Return {\tt true}
      \EndIf
    \end{algorithmic}
    \hrule height .08em
  \end{algorithm}
  
  Algorithm~\ref{alg1} describes our new algorithm verifying weak $k$\nobreakdash-step opacity.
  The idea of the algorithm is as follows. We first compute the observer of $G$, denoted by $G^{obs}$, and the projected automaton of $G$, denoted by $P(G)$. Then, for every reachable state $X$ of $G^{obs}$, we add the pairs $(x,X\cap Q_{NS})$ to the set $Y$, where $x$ is a secret state of $X$ and $X\cap Q_{NS}$ is the set of all nonsecret states of $X$. 
  Intuitively, in these states, the intruder estimates that $G$ may be in the secret state $x$ or in the nonsecret states of $X\cap Q_{NS}$. To verify that the intruder does not reveal the secret state, we need to check that every possible path of length up to $k$ starting in $x$ is accompanied by a path with the same observation starting in a nonsecret state of $X\cap Q_{NS}$. 
  To this end, we construct the automaton $H$ as the part of the full observer of $G$ consisting only of states reachable from the states forming the second components of the pairs in $Y$, and the automaton $\C=P(G) \times H$ as the product automaton of the projected automaton of $G$ and $H$. In $\C$, all transitions are observable, and every path from a secret state $x$ is synchronized with all the possible paths with the same observation starting in the states of $X\cap Q_{NS}$. Thus, if there is a path from the secret state $x$ of length up to $k$ that is not accompanied by a path with the same observation from a state of $X\cap Q_{NS}$, then this path from the state $x$ in $P(G)$ ends up in a state, say, $q$, whereas all paths in $H$ with the same observation from the state $X\cap Q_{NS}$ end up in the state $\emptyset$. Here, $X\cap Q_{NS}$ and $\emptyset$ are understood as the states of the full observer of $G$. Thus, if the DES $G$ is not weakly $k$-step opaque, there is a state of $Y$ from which a state of the from $(q,\emptyset)$ is reachable in at most $k$ steps. 
  Therefore, we search the automaton $\C$ and mark all its states that are reachable from a state of $Y$ in at most $k$ steps. If a state of the from $(q,\emptyset)$ is marked, then $G$ is not weakly $k$-step opaque; otherwise, it is.
  
  We prove the correctness of Algorithm~\ref{alg1} and analyze its complexity in detail below. Intuitively, the correctness follows from the fact that the BFS visits all nodes at distance $d$ before visiting any nodes at distance $d+1$. In other words, all states of $\C$ reachable from the states of $Y$ in at most $k$ steps are visited (and marked) before any state at distance $k+1$. The implementation of the BFS is, however, the key step to obtain the claimed complexity.
  Namely, the classical BFS of \cite{IntroToAlg} maintains an array to store the shortest distances (aka the number of hops) of every node to an initial node. Since storing a number less than or equal to $k$ requires $\log(k)$ bits, using the classical BFS requires the space of size $O(\log(k) n 2^n)$ to store the shortest distance of every state of $\C$ to a state of $Y$, because $\C$ has $O(n2^n)$ states.

  For our purposes, we do not need to know the shortest distance of every state to a state of $Y$, but we rather need to keep track of the number of hops from the states of $Y$ made so far. 
  
  We can achieve this by modifying the classical BFS so that we do not store the shortest distances for every state of $\C$, but only the current distance. We store the current distance in the queue used by the BFS, see Algorithm~\ref{bfsNew}. In particular, we first push number 0 to the queue, followed by all the states of $Y$. Assuming that $k>0$, number 0 is processed in such a way that it is dequeued from the queue, and number 1 is enqueued.
  After processing all the states of $Y$ from the queue, that is, having number 1 at the head of the queue, we know that all the elements of the queue after number 1 are the states at distance one from the states of $Y$ and not less. 
  The algorithm proceeds this way until it has either visited all the states of $\C$ or the number stored in the queue is $k$.
  The algorithm marks all states of $\C$ that it visits.
  
  This approach requires to store only one $\log(k)$\nobreakdash-bit number at a time rather than $n2^n$ such numbers, and hence the complexity of the algorithm then basically follows from the fact that the distance is bounded by the number of states of $\C$, and not by the parameter $k$. 
  
  Since Algorithm~\ref{bfsNew} is a minor modification of the BFS of \cite{IntroToAlg}, very similar arguments show its correctness and complexity. For this reason, we do not further discuss the correctness and complexity of Algorithm~\ref{bfsNew}.
  
  \begin{algorithm}\hrule height .08em\vspace{-5pt}
    \caption{The Breadth-First Search used in Algorithm~\ref{alg1}}\label{bfsNew}
    \begin{algorithmic}[1]
      \vspace{2pt}\hrule\vspace{3pt}
      \Require A DES $G=(V,\Sigma,\delta,I)$, a set $S\subseteq V$, $k\in\mathbb{N}_\infty$

      \Ensure $G$ with all states at distance at most $k$ from the states of $Y$ marked

      \State Initialize the queue $Q := \emptyset$
      \State Enqueue number 0 to $Q$
      \State Mark every node $s\in S$ and enqueue it to $Q$
      \State Color every node $u\in V-S$ {\it white}
      \While {$Q \neq \emptyset$}
        \State $u := $ {\sc Dequeue}$(Q)$
        \If {$u \notin V$ and $u=k$}
          \State Terminate, states at distance $\le k$ were visited
        \ElsIf {$u\notin V$ and $u < k$}
          \State Enqueue $u+1$ to $Q$
        \ElsIf {$u\in V$ is a state of $G$}
          \For {every state $v$ reachable in one step from $u$}
            \If {the color of $v$ is {\it white}}
              \State Mark state $v$ and enqueue it to $Q$
            \EndIf
          \EndFor
          \State Color $u$ {\it black}
        \EndIf
      \EndWhile
    \end{algorithmic}
    \hrule height .08em
  \end{algorithm}
  
  Before we prove Theorem~\ref{thm-correctAlg1} below showing that $G$ is weakly $k$\nobreakdash-step opaque if and only if no state of the form $(\cdot,\emptyset)$ is marked in $\C$, we illustrate Algorithm~\ref{alg1} in the following two examples.
  
  In the first example, we consider one-step opacity of the DES $G$ depicted in Figure~\ref{fig:inso-cso-ex}(a) where all events are observable, state $2$ is secret, and state $4$ is nonsecret. The other states are neutral, meaning that they are neither secret nor nonsecret.\footnote{The meaning of neutral states is not yet clear in the literature. They are fundamental in language-based opacity, but questionable in state-based opacity. In any case, we cannot simply handle neutral states as nonsecret states.}
  The observer $G^{obs}$ of $G$ is depicted in Figure~\ref{fig:inso-cso-ex}(b).
  \begin{figure}
    \centering
    \subfloat[A DES $G$.]{\includegraphics[align=c,scale=.8]{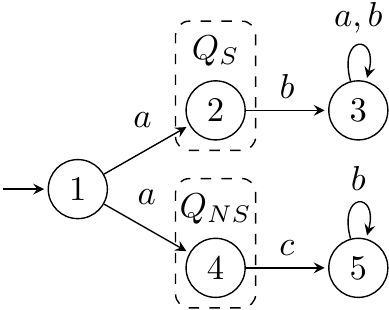}}\qquad
    \subfloat[Automata $G^{obs}$ and $H$.]{\includegraphics[align=c,scale=.8]{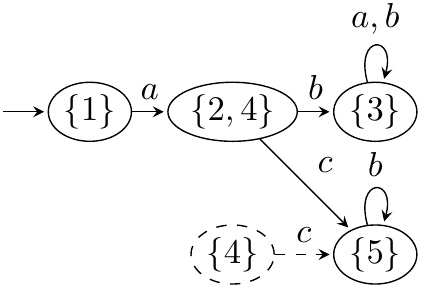}}
    \caption{A DES $G$ (a) and the observer $G^{obs}$ (b), the solid part. The automaton $H$ forming the relevant part of the full observer of $G$ is obtained from $G^{obs}$ by adding the dashed part; neither state $\emptyset$ nor the missing transitions to it are depicted in $G^{obs}$ and $H$.}
    \label{fig:inso-cso-ex}
  \end{figure}
  Since $G$ has no unobservable events, the projected automaton $P(G)=G$.
  Now, only the state $X=\{2,4\}$ of $G^{obs}$ contains a secret state, and hence intersecting it with $Q_S$ results in the set $Y=\{(2,\{4\})\}$. Notice that state $\{4\}$ is not in the observer $G^{obs}$, and therefore we need to add it to $H$ together with all the states that are reachable from state $\{4\}$ in the full observer of $G$. 
  The resulting automaton $H$ is depicted in Figure~\ref{fig:inso-cso-ex}(b) and is formed by the observer $G^{obs}$ together with the dashed state $\{4\}$ and the dashed transition from $\{4\}$ to $\{5\}$. Notice that, by the definition of the (full) observer, all the missing transitions in Figure~\ref{fig:inso-cso-ex}(b) indeed lead to state $\emptyset$, for instance, $\delta(\{1\},b)=\delta(\{5\},a)=\emptyset$. However, to keep the figures simple, we do not depict state $\emptyset$ and the transitions to state $\emptyset$.
  The marked part of the automaton $\C_1 = P(G) \times H$ reachable from the states of $Y$ in at most one step is depicted in Figure~\ref{ex:alg1-obs}. Since state $(3,\emptyset)$ is marked in $\C_1$, $G$ is not one\nobreakdash-step opaque; indeed, observing the string $ab$, the intruder reveals that $G$ must have been in the secret state $2$ one step ago.
  \begin{figure}
    \centering
    \includegraphics[scale=.8]{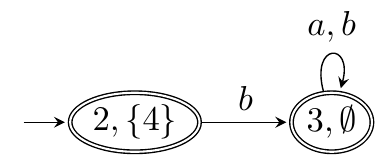}
    \caption{The reachable part of $\C_1$, where the single state of $Y$ is denoted by the little arrow.}
    \label{ex:alg1-obs}
  \end{figure}
  
  To illustrate an affirmative case, we again consider the DES $G$, but this time we assume that the event $c$ is unobservable. We denote by $\tilde{G}$ the DES $G$ where events $a,b$ are observable, the event $c$ is unobservable, state $2$ is secret, and state $4$ is nonsecret.
  The projected automaton $P(\tilde{G})$ and the observer $\tilde G^{obs}$ are depicted in Figure~\ref{ex:alg1-obs2}.
  The only state of $\tilde{G}^{obs}$ containing a secret state is the state $X=\{2,4,5\}$, which results in the set $Y=\{(2,\{4\})\}$. Again, state $\{4\}$ is not in $\tilde G^{obs}$, and hence we construct the relevant part $\tilde H$ of the full observer of $\tilde G$ by extending $\tilde G^{obs}$ by state $\{4\}$ and all the reachable states from it. The result (without state $\emptyset$ and the transitions to state $\emptyset$) is depicted in Figure~\ref{ex:alg1-obs2}(b), both the solid and the dashed part.
  The marked part of $\C_2 = P(\tilde{G}) \times \tilde{H}$ is depicted in Figure~\ref{ex:alg1-obs3}. Since no state of the form $(\cdot,\emptyset)$ is marked in $\C_2$, $\tilde{G}$ is one-step opaque.

  \begin{figure}
    \centering
    \subfloat[The automaton $P(\tilde G)$.]{\includegraphics[scale=.76]{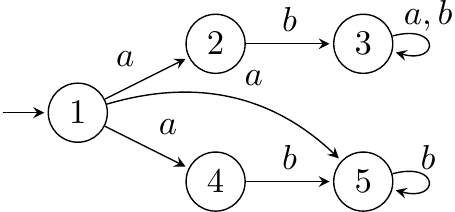}}\quad
    \subfloat[Automata $\tilde G^{obs}$ and $\tilde H$]{\includegraphics[scale=.77]{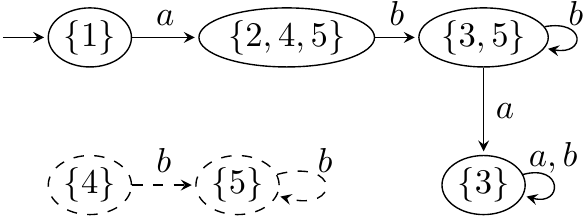}}
    \caption{The automaton $P(\tilde{G})$ (a) and the observer $\tilde G^{obs}$ (b), the solid part. The automaton $\tilde H$ forming the relevant part of the full observer of $\tilde G$ is obtained from $\tilde G^{obs}$ by adding the dashed part; neither state $\emptyset$ nor the missing transitions to it are depicted in $G^{obs}$ and $H$.}
    \label{ex:alg1-obs2}
  \end{figure}

  \begin{figure}
    \centering
    \includegraphics[scale=.8]{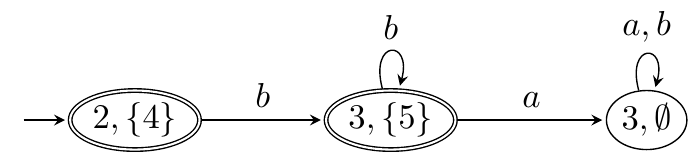}
    \caption{The reachable part of $\C_2$, where the single state of $Y$ is denoted by the little arrow.}
    \label{ex:alg1-obs3}
  \end{figure}

  We now prove the correctness of our algorithm.
  \begin{thm}\label{thm-correctAlg1}
    A DES $G$ is weakly $k$-step opaque with respect to $Q_S$, $Q_{NS}$, and $P$ if and only if Algorithm~\ref{alg1} returns {\tt true}.
  \end{thm}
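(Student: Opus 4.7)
The plan is to prove both implications by contraposition, exhibiting an exact correspondence between marked states of the form $(q,\emptyset)$ in $\C$ and witnesses of a violation of weak $k$-step opacity. Intuitively, a run of $\C$ from a state of $Y$ tracks, in its first coordinate, a concrete run of $P(G)$ originating in a chosen secret state $x$, and, in its second coordinate, the simultaneous evolution in the full observer of $G$ starting from the nonsecret estimate $X\cap Q_{NS}$; reaching $(q,\emptyset)$ thus means that the secret run is not accompanied by any nonsecret run with the same observation.

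First I would handle the ``only if'' direction in contrapositive form. Assuming that $(q,\emptyset)$ is marked, Algorithm~\ref{bfsNew} guarantees a path in $\C$ of length at most $k$ from some $(x, X\cap Q_{NS}) \in Y$ to $(q,\emptyset)$; let $w\in\Sigma_o^*$ be its label. Unfolding the definitions of $P(G)$ and of the full observer, I would extract (i) a string $s\in L(G)$ with $X = \delta(I,P^{-1}(P(s)))$ and $x \in \delta(I,s) \cap Q_S$, (ii) a string $t \in P^{-1}(w)$ with $q \in \delta(x,t)$, so that $st \in L(G)$, $|P(t)|\le k$, and $\delta(\delta(I,s)\cap Q_S, t) \ne \emptyset$, and (iii) the bookkeeping identity
\[
  \delta\bigl(X\cap Q_{NS},\,P^{-1}(w)\bigr) \;=\; \bigcup_{\substack{s': P(s')=P(s)\\ t': P(t')=w}} \delta\bigl(\delta(I,s')\cap Q_{NS},\, t'\bigr),
\]
so that the $H$-coordinate being $\emptyset$ forces $\delta(\delta(I,s')\cap Q_{NS}, t') = \emptyset$ for every admissible pair $(s',t')$, contradicting weak $k$-step opacity.

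For the converse I would reverse the construction. Starting from any pair $(s,t)$ witnessing the negation of weak $k$-step opacity, set $X := \delta(I, P^{-1}(P(s)))$, choose $x \in \delta(I,s)\cap Q_S$ with $\delta(x,t) \ne \emptyset$, and pick $q \in \delta(x,t)$. Then $(x, X\cap Q_{NS}) \in Y$ by line~\ref{line6}, and the identity above together with the negation of opacity yields $\delta(X\cap Q_{NS},\,P^{-1}(P(t))) = \emptyset$. Hence the path in $\C$ from $(x, X\cap Q_{NS})$ on the label $P(t)$ ends in $(q,\emptyset)$ in at most $|P(t)|\le k$ observable steps, and is therefore marked by Algorithm~\ref{bfsNew}.

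The main obstacle will be the careful translation between observable labels along paths of $\C$ and the underlying strings of $\Sigma^*$ that may contain unobservable events, together with the bookkeeping identity above that connects $X\cap Q_{NS}$ with the union over all $s'$ consistent with $P(s)$. The correctness of Algorithm~\ref{bfsNew}, namely that a state is marked iff it is reachable from $Y$ within $k$ observable steps, is used throughout as a black box.
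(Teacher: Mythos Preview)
Your proposal is correct and follows essentially the same approach as the paper: both directions are argued by contraposition, translating between a marked state $(q,\emptyset)$ reachable within $k$ steps from some $(x,X\cap Q_{NS})\in Y$ and a concrete witness $st$ of a violation of weak $k$-step opacity, using exactly the correspondence between the $H$-coordinate and $\delta(\delta(I,P^{-1}P(s))\cap Q_{NS},P^{-1}P(t))$. The only cosmetic differences are that you make the bookkeeping identity explicit, while the paper leaves it implicit, and the paper separates the degenerate case $X\cap Q_{NS}=\emptyset$ (so that $(x,\emptyset)\in Y$ directly), which your uniform argument absorbs as the zero-length path.
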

  \begin{proof}
    If $G=(Q,\Sigma,\delta,I)$ is not weakly $k$\nobreakdash-step opaque, then there is $st\in L(G)$ such that $|P(t)|\le k$, $\delta(\delta(I,s)\cap Q_S, t) \neq \emptyset$, and $\delta(\delta(I,P^{-1}P(s))\cap Q_{NS},P^{-1}P(t)) = \emptyset$. We have two cases.

    (i) If $\delta(I,P^{-1}P(s))\cap Q_{NS} = \emptyset$, then $G$ is not weakly $k$\nobreakdash-step opaque. Algorithm~\ref{alg1} detects this case, because for the state $X=\delta(I,P^{-1}P(s))$ of the observer of $G$, we have that $X\cap Q_{S} \supseteq \delta(I,s)\cap Q_S \neq \emptyset$ and $X\cap Q_{NS} = \emptyset$, and hence there is $q\in X\cap Q_{S}$ resulting in adding the pair $(q,\emptyset)$ to the set $Y$ in line~\ref{line6}.
 
    (ii) If $\delta(I,P^{-1}P(s))\cap Q_{NS} = Z \neq \emptyset$, then all pairs of the form $(\delta(I,P^{-1}P(s))\cap Q_S)\times \{Z\}$ are added to $Y$. Since $\delta(\delta(I,s)\cap Q_S, t) \neq \emptyset$, there is a pair $(z,Z)\in  Y$ such that generating the string $P(t)$ in the automaton $P(G)$ from state $z$ changes the state to a state $q$. On the other hand, $\delta(Z,P^{-1}P(t))=\emptyset$ implies that generating $P(t)$ in the full observer of $G$ from state $Z$ changes the state to state $\emptyset$, and hence the pair $(q,\emptyset)$ is reachable in $\C$ from the state $(z,Z)\in Y$ in at most $|P(t)|\le k$ steps.
    In both cases, Algorithm~\ref{alg1} marks $(q,\emptyset)$, and returns {\tt false}.

    On the other hand, if $G$ is weakly $k$-step opaque, we show that no pair of the form $(q,\emptyset)$ is reachable in $\C$ from a state of $Y$ in at most $k$ steps. For the sake of contradiction, we assume that a pair $(q,\emptyset)$ is marked in $\C$. However, this means that, in $G$, there is a string $s$ and a state $z\in Q$ such that $z\in \delta(I,s)\cap Q_S$, the state of the observer of $G$ reached under the string $P(s)$ is $X=\delta(I,P^{-1}P(s))$, and, for $Z = X \cap Q_{NS}$, the pair $(q,\emptyset)$ is reachable in $\C$ from the pair $(z,Z)\in Y$ by a string $w\in \Sigma_o^*$ of length at most $k$. 
    In particular, there is a string $t\in P^{-1}(w)$ such that when $G$ generates $t$, it changes its state from $z$ to $q$. Therefore, $q\in \delta(\delta(I,s)\cap Q_S,t)\neq\emptyset$. However, $\delta(\delta(I,P^{-1}P(s))\cap Q_{NS},P^{-1}(w)) = \delta(Z,P^{-1}(w)) = \emptyset$, because generating $w$ in $\C$ changes the pair $(z,Z)$ to $(q,\emptyset)$, and hence the full observer of $G$ changes its state from $Z$ to $\emptyset$ when generating $w$. This shows that $G$ is not weakly $k$-step opaque, which is a contradiction.
  \end{proof}

  We now discuss the complexity of our algorithm.
  \begin{thm}
    The space and time complexity of Algorithm~\ref{alg1} is $O(n2^n)$ and $O((n + m)2^n)$, respectively, where $n$ is the number of states of the input DES $G$ and $m$ is the number of transitions of $P(G)$. In particular, $m \le \ell n^2$, where $\ell$ is the number of observable events.
  \end{thm}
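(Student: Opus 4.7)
The plan is to analyze the complexity of each of the numbered lines of Algorithm~\ref{alg1} separately, bound the sizes of the intermediate objects $G^{obs}$, $P(G)$, $Y$, $H$, and $\C$, and then combine these bounds to obtain the claimed overall space and time complexities. The modified BFS of Algorithm~\ref{bfsNew} is the key ingredient that allows us to avoid a factor of $\log k$ in the space and a factor of $k$ in the time bound.

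First I would bound the sizes of the intermediate automata. Both the observer $G^{obs}$ and the full observer of $G$ are DFAs over $\Sigma_o$ obtained by the subset construction from an $n$\nobreakdash-state automaton, so each has at most $2^n$ states and at most $\ell 2^n$ transitions and can be built in time $O(\ell 2^n)$; the same therefore holds for the accessible portion $H$ used in the algorithm. The projected automaton $P(G)$ has exactly $n$ states and, by the definition in the Preliminaries, at most $m \le \ell n^2$ transitions, and is computable in polynomial time. The set $Y$ contains pairs $(x,X\cap Q_{NS})$ with $x\in X\cap Q_S$ and $X$ a state of $G^{obs}$, hence $|Y|\le n\cdot 2^n$ and $Y$ is built in time $O(n 2^n)$ while traversing $G^{obs}$.

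Next I would bound the product $\C=P(G)\times H$. It has at most $n\cdot 2^n$ states, which already yields the space bound $O(n 2^n)$. For the transitions, since $H$ is deterministic, from each state $(q,X)$ of $\C$ and each event $a\in\Sigma_o$ the number of outgoing transitions is bounded by the number of $a$-transitions from $q$ in $P(G)$; summing over the at most $2^n$ second components shows that $\C$ has at most $m\cdot 2^n$ transitions and can be constructed in time $O((n+m)2^n)$.

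It remains to analyze the BFS of Algorithm~\ref{bfsNew} on $\C$ starting from the set $Y\subseteq$ states of $\C$. The standard BFS runs in time linear in the number of vertices and edges, here $O((n+m)2^n)$; the crucial observation is that the modified queue stores only a single integer counter at a time together with the frontier states, and each state of $\C$ is enqueued and dequeued at most once, so the auxiliary storage is $O(n 2^n)$ rather than $O(\log(k)\,n 2^n)$ as in the classical array-based variant. Importantly, the counter never needs to exceed the diameter of $\C$, which is at most $n 2^n$, so even when $k=\infty$ only $O(\log n)$ bits are needed for the counter and the analysis goes through uniformly.

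The main (minor) obstacle is the transition count of $\C$: one must resist the temptation to bound it by $\ell\cdot n\cdot 2^n$, which would give the worse complexity $O(\ell n 2^n)$; the sharper bound $m\cdot 2^n$ exploits determinism of $H$ on the second coordinate. Putting the four parts together, each of the lines~\ref{l1}--\ref{l11} runs within $O((n+m)2^n)$ time and $O(n 2^n)$ space, and the final scan for a state of the form $(q,\emptyset)$ is within these bounds as well, yielding the claimed complexities.
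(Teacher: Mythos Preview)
Your proposal is correct and follows essentially the same line-by-line complexity analysis as the paper, including the key observation that $\C$ has $O(n2^n)$ states and $O(m2^n)$ transitions (the paper phrases this as ``at most $2^n$ copies of $P(G)$'', which amounts to your determinism-of-$H$ argument). One small slip: the counter in the BFS is bounded by the diameter $n2^n$, so it needs $O(n)$ bits rather than $O(\log n)$, but this is still negligible compared with the $O(n2^n)$ space for $\C$ and does not affect the stated bounds.
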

  \begin{proof}
    Computing the observer and the projected NFA of $G$, lines~\ref{l2} and~\ref{l3}, takes time $O(\ell 2^n)$ and $O(m+n)$, respectively. The cycle on lines~\ref{l4}--\ref{l8} takes time $O(n 2^n)$. Constructing the part $H$ of the full observer of $G$, line~\ref{l9}, takes time $O(\ell 2^n)$. Constructing $\C$, line~\ref{l10}, takes time $O(n 2^n + m 2^n)$, where $O(n 2^n)$ is the number of states and $O(m 2^n)$ is the number of transitions of $\C$. The bounds come from the fact that we create at most $2^n$ copies of the automaton $P(G)$. The BFS takes time linear in the size of $\C$, and the condition of line~\ref{l11} can be processed during the BFS. Since $m\ge \ell$, the proof is complete.
  \end{proof}

  We now briefly review the complexity of existing algorithms verifying weak $k$-step opacity. First, notice that the complexity of existing algorithms is exponential, which seems unavoidable because the problem is \PSpace-complete, see \cite{BalunMasopust2021,BalunMasopust2022smc} for more details.\footnote{It is a long-standing open problem of computer science whether \PSpace-complete problems can be solved in polynomial time.}
  In particular, \cite{SabooriH11} designed an algorithm with complexity $O(\ell (\ell+1)^k 2^n)$, where $n$ is the number of states and $\ell$ is the number of observable events. Considering the verification of weak $\infty$\nobreakdash-step opacity, \cite{SabooriH12a} designed an algorithm with complexity $O(\ell 2^{n^2+n})$.
  \cite{Yin2017} introduced the notion of a two\nobreakdash-way observer and applied it to the verification of weak $k$\nobreakdash-step opacity with complexity $O(\min\{n2^{2n},n\ell^{k} 2^{n}\})$, and to the verification of weak $\infty$\nobreakdash-step opacity with complexity $O(n2^{2n})$; the formulae already include a correction by \cite{Lan2020}.
  \cite{BalunMasopust2021} designed algorithms verifying weak $k$\nobreakdash-step opacity and weak $\infty$-step opacity with complexities $O((k+1)2^n (n + m\ell^2))$ and $O((n + m\ell)2^n)$, respectively, where $m \le \ell n^2$ is the number of transitions in the projected automaton. These algorithms outperform the two\nobreakdash-way observer if $k$ is polynomial in $n$ or larger than \mbox{$2^n-2$}, since weak $(2^n-2)$-step opacity and weak $\infty$-step opacity coincide, see \cite{Yin2017}.
  \cite{Wintenberg2021} discussed and experimentally compared four approaches to the verification of weak $k$\nobreakdash-step opacity based on (i) the secret observer, (ii) the reverse comparison, (iii) the state estimator, and (iv) the two-way observer. Their respective state complexities are $O(2^{n(k+3)})$, $O(n(k+1)3^n)$, $O((\ell+1)^k 2^n)$, and $O(\min\{2^n,\ell^k\}2^n)$.\footnote{The state complexity of the two-way observer is correct. The correction of \cite{Lan2020} consists in adding a time bound to compute the intersection of two sets, and hence it does not influence the number of states.}

  Notice that these bounds are formulated only in the number of states of the constructed automata, disregarding the number of transitions and the time of the construction. Therefore, the time-complexity bounds differ from the state-complexity bounds at least by the factor of $\ell$, if the constructed automata are deterministic, or by a factor of $m\le \ell n^2$ if the construction of the automaton involves an NFA, such as in the case of the reverse comparison. 
  Namely, the time-complexity bounds are $O(\ell 2^{n(k+3)})$ for the secret observer, where $n$ is the number of states and $\ell$ is the number of observable events, $O((n+m)(k+1)3^n)$ for the reverse comparison, where $m \le \ell n^2$ is the number of transitions in an involved NFA, $O(\ell (\ell+1)^k 2^n)$ for the state estimator, and $O(\min\{n2^{2n},n\ell^{k} 2^{n}\})$ for the two-way observer.

  As the reader may notice, the above complexities depend on the parameter $k$. A partial exception is the two\nobreakdash-way observer that does not depend on $k$ if $\ell^k \ge 2^n$, that is, if $k$ is larger than the number of states divided by the logarithm of the number of observable events.

  Since the complexity of Algorithm~\ref{alg1} is $O((n+m)2^n)$, where $n$ is the number of states of the input DES $G$ and $m\le \ell n^2$ is the number of transitions of the projected automaton of $G$, it does not depend on the parameter $k$ and, in general, outperforms the existing algorithms. An exception is the case of a very small parameter $k$. In particular, if $k < 2\log(n)/\log(\ell)$, the algorithms based on the state estimator and on the two-way observer are, in the worst-case, faster than our algorithm. Notice that this theoretical result agrees with the experimental results of \cite{Wintenberg2021}.

\section{Verification of Strong {\it k}-Step Opacity}\label{strongksoAlgo}
  Although weak $k$-step opacity seems confidential enough, \cite{Falcone2014} have pointed out that it is actually not as confidential as intuitively expected. Namely, the intruder may realize that the system previously was in a secret state, though it is not able to deduce the exact time when that happened, see \cite{Falcone2014} for more details and examples. 
  Consequently, they defined strong $k$\nobreakdash-step opacity as a variation of $k$\nobreakdash-step opacity with a higher level of confidentiality.

  Before we recall the definition of strong $k$-step opacity as formulated by \cite{Falcone2014}, we illustrate the problem with weak $k$-step opacity. To this end, we consider the system depicted in Figure~\ref{exampleFig}, where state $2$ is secret and the other states are nonsecret, and where the event $a$ is observable and the event $u$ is unobservable. Observing the string $aa$, the intruder realizes that the system must have been in the secret state $2$, though it cannot say whether it was one or two steps ago. Actually, even observing the string $a$ already reveals that the system currently is or one step ago was in the secret state $2$.

  \begin{figure}
    \centering
    \includegraphics[scale=.8]{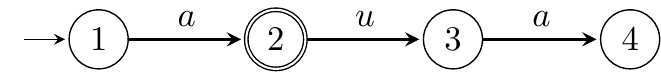}
    \caption{A deterministic DES that is not confidential enough. The secret state is double circled.}
    \label{exampleFig}
  \end{figure}

  In accordance with \cite{Falcone2014}, we consider strong $k$-step opacity only for deterministic DES where all states that are not secret are nonsecret, that is, $Q_{NS}=Q-Q_{S}$. It means that every state has its own secret/nonsecret status and there are no neutral states.

  \begin{defn}
    Given a deterministic DES $G=(Q,\Sigma,\delta,q_0)$ and $k\in \mathbb{N}_\infty$. System $G$ is {\em strongly $k$\nobreakdash-step opaque ($k$-SSO)} with respect to the set $Q_S$ of secret states and observation $P \colon \Sigma^* \to \Sigma_o^*$ if for every string $s \in L(G)$, there exists a string $w \in L(G)$ such that $P(s)=P(w)$ and for every prefix $w'$ of $w$, if $|P(w)|-|P(w')| \le k$, then $ \delta(q_0,w')\notin Q_S$.
  \end{defn}
  
  Notice that whereas weak $k$\nobreakdash-step opacity prevents the intruder from revealing the exact time when the system was in a secret state during the last $k$ observable steps, strong $k$\nobreakdash-step opacity prevents the intruder from revealing that the system was in a secret state during the last $k$ observable steps.

  For an illustration, we again consider the system depicted in Figure~\ref{exampleFig}, where state $2$ is secret and event $u$ is unobservable. The system is weakly one\nobreakdash-step opaque, but not strongly one\nobreakdash-step opaque, because for $s=aua$, the only $w$ with the same observation as $s$ is $w=aua$, and hence the prefixes $w'$ for which $|P(w)|-|P(w')|\le 1$ are the strings $w'=a$, $w'=au$, and $w'=aua$. However, for $w'=a$, we obtain that $\delta(1,a)=2 \in Q_{S}$, which violates the definition of strong one\nobreakdash-step opacity.

  In fact, the system is neither strongly 0-step opaque, because for $s=au$, the only $w$ with the same observation as $s$ are the strings $au$ and $a$, and therefore the prefixes $w'$ for which $|P(w)|-|P(w')|\le 0$ is either $w'=a$ or $w'=au$. However, for $w'=a$, we obtain that $\delta(1,a)=2 \in Q_{S}$, which violates the definition of strong $0$\nobreakdash-step opacity. On the other hand, the system is obviously current\nobreakdash-state opaque. Consequently, the notions of strong 0\nobreakdash-step opacity and current-state opacity\footnote{Current\nobreakdash-state opacity is a synonym for weak 0\nobreakdash-step opacity.} do not coincide. 
  
  We show in Theorem~\ref{thm:0-sso_0-so} below that unobservable transitions from secret states to nonsecret states, as in our example, are the only issues making the difference between strong 0\nobreakdash-step opacity and weak 0-step (current-state) opacity. To this end, we define the notion of a normal DES.

\subsection{Normalization}
  In what follows, we call the systems where there are no unobservable transitions from secret states to nonsecret states {\em normal}.
  For systems that are not normal, we provide a construction to normalize them, that is, we eliminate unobservable transitions from secret states to nonsecret states without affecting the property of being strongly $k$-step opaque.

 \begin{construction}\label{consNorm}
    Let $G=(Q,\Sigma, \delta, q_0)$ be a deterministic DES, $k\in \mathbb{N}_\infty$, $Q_S\subseteq Q$ be the set of secret states, and $P\colon \Sigma^* \to \Sigma_o^*$ be the observations. We construct
    \[
      G_{norm}=(Q_n,\Sigma, \delta_{n}, q_0)
    \]
    where $Q_n = Q\cup Q'$ for $Q'=\{q' \mid q\in Q\}$ being a disjoint copy of $Q$, and the transition function $\delta_n$ is defined as follows.
    We initialize $\delta_{n} := \delta$ and further modify it in the following four steps:
    \begin{enumerate}
      \item For every $p\in Q_S$, $q\in Q_{NS}$, and $u\in\Sigma_{uo}$, we replace the transition $(p,u,q)$ by $(p,u,q')$ in $\delta_{n}$.

      \item For every unobservable transition $(p,u,q)$ in $\delta$, that is, $u\in \Sigma_{uo}$, we add the transition $(p',u,q')$ to $\delta_n$.

      \item For every observable transition $(q,a,r)$ in $\delta$, that is, $a\in\Sigma_o$, we add the transition $(q',a,r)$ to $\delta_{n}$.
      
      \item We remove unreachable states and corresponding transitions.
    \end{enumerate}
    The set of secret states of $G_{norm}$ is the set $Q_n^S=Q_S\cup Q'$.
    \hfill$\diamond$
  \end{construction}
  
  In the sequel, we call $G_{norm}$ the {\em normalization\/} of $G$. If $G$ and $G_{norm}$ coincide, we say that $G$ is {\em normal}.
  
  To illustrate Construction~\ref{consNorm}, consider the system depicted in Figure~\ref{fig:lemma} (left). Its normalization $G_{norm}$ is depicted in the same figure (right). States $2$ and $3$ of $G$ are secret, events $a$ and $b$ are observable, and $u$ is unobservable. The normalization $G_{norm}$ of $G$ initially contains five new secret states $1'$, $2'$, $3'$, $4'$, $5'$. Step (1) of Construction~\ref{consNorm} replaces transitions $(2,u,4)$ and $(3,u,4)$ by $(2,u,4')$ and $(3,u,4')$, respectively, step (2) adds four unobservable transitions $(1',u,2')$, $(2',u,4')$, $(3',u,4')$, and $(4',u,5')$, and step (3) adds the observable transitions $(1',a,3)$, $(2',a,4)$, $(4',a,5)$ and $(5',b,5)$. Finally, step (4) eliminates unreachable states $1'$, $2'$, $3'$, and the corresponding transitions.

  \begin{figure}
    \centering
    \includegraphics[align=c,scale=.65]{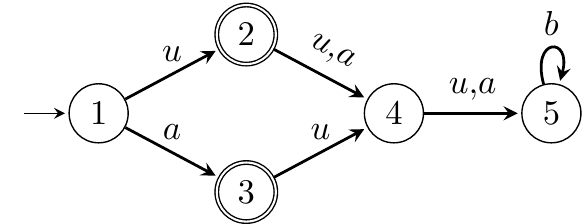}
    $\Rightarrow$
    \includegraphics[align=c,scale=.65]{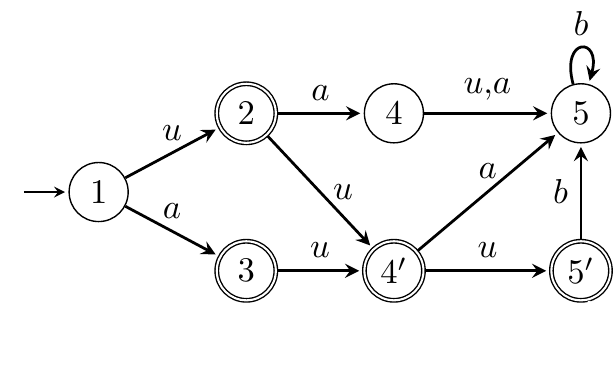}
    \caption{A deterministic DES $G$ (left) and its normalization $G_{norm}$ (right). Secret states are double circled.}
    \label{fig:lemma}
  \end{figure}

  The following lemma compares the behaviors of $G$ and its normalization $G_{norm}$.
  \begin{lem}\label{lemma:delta_norm}
    Let $G=(Q,\Sigma, \delta, q_0)$ be a deterministic DES, and let $Q_S\subseteq Q$ be the set of secret states. Let $G_{norm}$ be the normalization of $G$ obtained by Construction~\ref{consNorm}. Then, for every $w\in\Sigma^*$ and $a\in\Sigma$, the following holds:
    \begin{enumerate}
      \item For $a\in\Sigma_{uo}$, $\delta(q_0,wa)=p$ if and only if $\delta_n(q_0,wa)\in\{p,p'\}$, where $p'\in Q'$ is a copy of $p\in Q$;

      \item\label{it2} For $a\in\Sigma_o$, $\delta(q_0,wa)=\delta_n(q_0,wa)$; 

      \item $L(G)=L(G_{norm})$.
    \end{enumerate}
  \end{lem}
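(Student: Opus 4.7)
The plan is to prove all three items simultaneously by induction on $|w|$, using a strengthened invariant that keeps track of which copy of $Q$ the state of $G_{norm}$ is in. Specifically, for every $w\in\Sigma^*$ I would establish that $\delta(q_0,w)$ is defined iff $\delta_n(q_0,w)$ is defined, and in that case $\delta_n(q_0,w)\in\{\delta(q_0,w),\delta(q_0,w)'\}$, with the refinement that $\delta_n(q_0,w)=\delta(q_0,w)$ whenever $w=\varepsilon$ or $w$ ends with an observable event. Item~2 is then immediate, item~1 follows from the "membership" part applied to $wa$ with $a\in\Sigma_{uo}$, and item~3 follows from the biconditional on definedness.

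The inductive step splits first on whether the last event $a$ is observable, then on whether the inductive hypothesis places $\delta_n(q_0,w)$ in the unprimed or the primed copy. If $\delta_n(q_0,w)=q$ and $a\in\Sigma_o$, no step of the construction touches observable transitions out of $q$, so $\delta_n(q,a)=\delta(q,a)$ and we land again in the unprimed copy. If $\delta_n(q_0,w)=q'$ and $a\in\Sigma_o$, Step~(3) supplies the transition $(q',a,\delta(q,a))$ precisely when $(q,a,\delta(q,a))$ exists in $\delta$, and this transition crosses back to the unprimed copy, which is the mechanism that re-synchronizes the two automata at every observable event, establishing the refined equality for item~2. If $a\in\Sigma_{uo}$ and $\delta_n(q_0,w)=q$, Step~(1) redirects $(q,a,p)$ to $(q,a,p')$ exactly when $q\in Q_S$ and $p\in Q_{NS}$, and otherwise leaves it unchanged, so $\delta_n(q,a)\in\{\delta(q,a),\delta(q,a)'\}$; if $\delta_n(q_0,w)=q'$, Step~(2) guarantees $\delta_n(q',a)=\delta(q,a)'$.

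The main obstacle, really a bookkeeping one, will be to check that Steps~(1)--(3) never create spurious transitions, so that the definedness biconditional holds and hence $L(G_{norm})\subseteq L(G)$. This amounts to observing that every transition of $G_{norm}$ is in one-to-one correspondence with a transition of $\delta$: Step~(1) only redirects existing unobservable transitions, while Steps~(2) and~(3) copy existing transitions to the primed copy. Consequently any path from $q_0$ in $G_{norm}$ can be mirrored by a path from $q_0$ in $G$ with the same label, yielding $L(G_{norm})\subseteq L(G)$ for item~3; the reverse inclusion has already been covered by the forward direction of the invariant. The elimination of unreachable states in Step~(4) plays no role in any of these arguments, since unreachable states are by definition never visited from $q_0$.
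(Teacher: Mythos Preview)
Your proposal is correct and follows essentially the same approach as the paper: an induction on $|w|$ with the hypothesis that $\delta_n(q_0,w)\in\{\delta(q_0,w),\delta(q_0,w)'\}$, followed by a case split on whether the next event is observable and on whether the current $G_{norm}$-state lies in the primed or unprimed copy. Your presentation is slightly more explicit---you state the refined invariant (equality after an observable event) and the definedness biconditional up front, whereas the paper leaves these implicit and derives item~(3) at the end as a one-line consequence of (1) and (2)---but the underlying case analysis and the use of Steps~(1)--(3) of the construction are identical.
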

  \begin{proof}
    We prove (1) and (2) by induction on the length of $w$. The induction hypothesis is that either $\delta(q_0,w)=p=\delta_n(q_0,w)$, or $\delta(q_0,w)=p$ and $\delta_n(q_0,w)=p'$.

    To prove (1), let $a$ be unobservable. We first consider the case $\delta(q_0,w)=\delta_n(q_0,w)=p$.
    First, if $p$ is nonsecret, Construction~\ref{consNorm} adds every transition $(p,a,q)\in \delta$ to $\delta_n$.
    On the other hand, if $p$ is secret, $\delta_n$ contains the transition $(p,a,q')$ for every transition $(p,a,q)\in \delta$ with $q\in Q_{NS}$, and the transition $(p,a,q)$ for every transition $(p,a,q)\in \delta$ with $q \in Q_S$.
    In both cases, Construction~\ref{consNorm} adds no other transition from state $p$ to $\delta_n$, and hence $\delta(q_0,wa) = \delta(p,a) = q$ if and only if $\delta_n(q_0,wa) = \delta_n(p,a) \in \{q,q'\}$. 
    Notice that this case also covers the base case of the induction, since for $w=\eps$, $\delta(q_0,w) = \delta_n(q_0,w) = q_0$.

    Now, we consider the case $\delta_n(q_0,w)=p'$ and $\delta(q_0,w)=p$. Since Construction~\ref{consNorm} adds the transition $(p',a,q')$ to $\delta_n$ for every unobservable transition $(p,a,q)\in\delta$, we have that $\delta(q_0,wa)=\delta(p,a)=q$ if and only if $\delta_n(q_0,wa)=\delta_n(p',a)=q'$.

    To prove (2), let $a$ be observable. We first consider the case $\delta(q_0,w)=\delta_n(q_0,w)=p$. Then, from the state $p$, Construction~\ref{consNorm} adds to $\delta_n$ all and only the observable transitions of $\delta$, and hence $\delta(p,a) = \delta_{n}(p,a)$.

    Now, we consider the case $\delta_n(q_0,w)=p'$ and $\delta(q_0,w)=p$. Then, Construction~\ref{consNorm} adds the transition $(p',a,q)$ to $\delta_n$ for every observable transition $(p,a,q)\in\delta$, and therefore $\delta(q_0,wa) = \delta(p,a) = \delta_n(p',a) = \delta_n(q_0,wa)$.

    Finally, $L(G)=L(G_{norm})$ of (3) follows from (1) and (2), since, for every $w\in\Sigma^*$, $\delta(q_0,w)$ is undefined if and only if $\delta_n(q_0,w)$ is undefined.
  \end{proof}

  The following lemma describes the meaning of normalization and states the main properties of a normalized DES.

  \begin{lem}\label{lemma:normalization}
    For a deterministic DES $G=(Q,\Sigma, \delta, q_0)$, $k\in \mathbb{N}_\infty$, the set of secret states $Q_S$, and the observation $P\colon \Sigma^* \to \Sigma_o^*$, let $G_{norm}$ be the normalization of $G$ obtained by Construction~\ref{consNorm}. Then, the following holds true:
    \begin{enumerate}
      \item $G_{norm}$ is deterministic;

      \item In $G_{norm}$, there is no nonsecret state reachable from a secret state by a sequence of unobservable events, that is, $\delta_n(Q_{n}^{S},P^{-1}(\eps))\cap (Q_n-Q_n^S) = \emptyset$;

      \item $G$ is $k$-SSO with respect to $Q_S$ and $P$ if and only if $G_{norm}$ is $k$-SSO with respect to $Q_n^S$ and $P$.
    \end{enumerate}
  \end{lem}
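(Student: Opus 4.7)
The plan is to handle the three items separately. Items (1) and (2) follow by unpacking the four steps of Construction~\ref{consNorm}, while item (3) is the substantive statement and will be reduced to Lemma~\ref{lemma:delta_norm} together with a short trajectory analysis of how primed states can be reached.

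For (1), I would observe that step (1) only redirects existing transitions and hence cannot create two transitions with the same label from the same state, whereas steps (2) and (3) add transitions only from primed states, which had no outgoing transitions beforehand, and each added transition is sourced from a unique transition of $\delta$; thus determinism is preserved. For (2), step (1) forces every unobservable transition out of $Q_S$ to land in $Q_S \cup Q'$, and step (2) forces every unobservable transition out of $Q'$ to land in $Q'$; consequently $\delta_n(Q_n^S, a) \subseteq Q_n^S$ for every unobservable $a$, and a trivial induction on the length of the unobservable string gives $\delta_n(Q_n^S, P^{-1}(\eps)) \subseteq Q_n^S$, as desired.

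For (3), I would treat the two directions in turn. The backward direction is the easier one: given a witness $w$ for $k$-SSO of $G_{norm}$ on input $s \in L(G) = L(G_{norm})$ (using Lemma~\ref{lemma:delta_norm}(3)), every prefix $w'$ of $w$ with $|P(w)| - |P(w')| \le k$ satisfies $\delta_n(q_0, w') \notin Q_n^S \supseteq Q_S$; Lemma~\ref{lemma:delta_norm} then forces $\delta_n(q_0, w')$ to be unprimed (else it would be in $Q'$), giving $\delta(q_0, w') = \delta_n(q_0, w') \notin Q_S$, so the same $w$ witnesses $k$-SSO of $G$.

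The forward direction is the main technical step. Given that $G$ is $k$-SSO and a witness $w \in L(G) = L(G_{norm})$ for some $s \in L(G_{norm})$, I claim the same $w$ witnesses $k$-SSO of $G_{norm}$. Fix a prefix $w'$ of $w$ with $|P(w)| - |P(w')| \le k$; I must show $\delta_n(q_0, w') \notin Q_n^S$. If $\delta_n(q_0, w')$ is unprimed, then by Lemma~\ref{lemma:delta_norm} it equals $\delta(q_0, w')$, which lies outside $Q_S$ by $k$-SSO of $G$ and is certainly outside $Q'$. The subtle case is $\delta_n(q_0, w') \in Q'$. Here I would take $w''$ to be the longest prefix of $w'$ with $\delta_n(q_0, w'') \notin Q'$; by inspection of Construction~\ref{consNorm}, transitions into $Q'$ arise only from steps (1) and (2), both of which are unobservable, while observable transitions out of $Q'$ (produced by step (3)) always leave $Q'$. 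Hence the suffix $v$ with $w' = w'' v$ consists entirely of unobservable events, so $P(w'') = P(w')$, and the transition crossing the boundary into $Q'$ is produced by step (1), forcing $\delta_n(q_0, w'') \in Q_S$. Lemma~\ref{lemma:delta_norm} then identifies $\delta(q_0, w'') = \delta_n(q_0, w'') \in Q_S$, while $|P(w)| - |P(w'')| = |P(w)| - |P(w')| \le k$; this contradicts $k$-SSO of $G$. The principal obstacle is precisely this factorization $w' = w'' v$ and certifying, via item (2) and the event-by-event structure of Construction~\ref{consNorm}, that $v$ cannot contain any observable event.
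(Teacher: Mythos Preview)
Your proposal is correct. Items (1) and (2) match the paper's argument, and your backward direction of (3) is simply the contrapositive of the paper's (the paper shows that if $G$ is not $k$-SSO then $G_{norm}$ is not $k$-SSO, by noting that for any bad prefix $w'$ with $\delta(q_0,w')=q_w\in Q_S$, Lemma~\ref{lemma:delta_norm} gives $\delta_n(q_0,w')\in\{q_w,q_w'\}\subseteq Q_n^S$).

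The forward direction of (3) is where you genuinely diverge. The paper fixes a single decomposition $w=xy$ with $x$ the \emph{shortest} prefix satisfying $|P(w)|-|P(x)|\le k$; this forces $x$ to be empty or to end with an observable event, so Lemma~\ref{lemma:delta_norm}(\ref{it2}) gives $\delta_n(q_0,x)=\delta(q_0,x)\in Q$. Since the witness property guarantees that no state along the $y$-segment in $G$ is secret, none of the step-(1) redirections ever fire, and hence $\delta_n(q_0,xy')=\delta(q_0,xy')\in Q\setminus Q_S$ for every prefix $y'$ of $y$ in one forward sweep. You instead argue per prefix $w'$: if $\delta_n(q_0,w')$ lands in $Q'$, you trace backward to the last unprimed prefix $w''$, observe that the crossing into $Q'$ must be a step-(1) transition out of $Q_S$ and that the remaining suffix is entirely unobservable, and derive a contradiction to the witness property at $w''$. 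Both routes rest on the same structural facts about Construction~\ref{consNorm} (entry into $Q'$ only from $Q_S$ via an unobservable event; $Q'$ is left only by observable events), but the paper packages them as a single forward induction from a canonical starting point, whereas you package them as a local backward contradiction. Your version avoids the ``shortest prefix'' trick; the paper's version avoids the case split and yields the identity $\delta_n(q_0,xy')=\delta(q_0,xy')$ uniformly.
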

  \begin{proof}
    To prove $(1)$, we analyze the steps of Construction~\ref{consNorm} creating $\delta_n$. First, $\delta_n$ is defined as $\delta$, which is deterministic. Then, step (1) replaces some unobservable transitions, which is an operation that preserves determinism of $\delta_n$. Step (2) adds the transition $(p',u,q')$ for every unobservable transition $(p,u,q)$ in $G$. Similarly, step (3) adds the transition $(q',a,p)$ for every observable transition $(q,a,p)$ in $G$. Since $G$ is deterministic, steps (2) and (3) preserve determinism. Altogether, $G_{norm}$ is deterministic.

    To prove $(2)$, step (1) of Construction~\ref{consNorm} replaces all unobservable transitions from a secret state to a nonsecret state by transitions from a secret state to a new secret state. Step (2) adds unobservable transitions only between the new states, which are all secret. Since no unobservable transition is defined from the new states to the old states, there is no nonsecret state in $G_{norm}$ reachable from a secret state by a sequence of unobservable events.

    To prove the first direction of $(3)$, we assume that $G$ is $k$\nobreakdash-SSO with respect to $Q_S$ and $P$, and show that then $G_{norm}$ is $k$\nobreakdash-SSO with respect to $Q_n^S$ and $P$. To this end, we show that for every string $s\in L(G_{norm})$, there exists a string $w \in L(G_{norm})$ such that $P(s)=P(w)$ and, for every prefix $w'$ of $w$, if $|P(w)|-|P(w')|\leq k$, then $\delta_n(q_0, w')\not\in Q_n^S$.
    Thus, let $s\in L(G_{norm})$ be an arbitrary string. Then, by Lemma~\ref{lemma:delta_norm}, $s\in L(G_{norm})=L(G)$, and since $G$ is $k$\nobreakdash-SSO with respect to $Q_S$ and $P$, there is a string $\tilde w\in L(G)$ such that $P(s)=P(\tilde w)$ and, for every prefix $\tilde w'$ of $\tilde w$, if $|P(\tilde w)|-|P(\tilde w')|\leq k$, then $\delta(q_0, \tilde w')\not\in Q_S$.
    By defining $w=\tilde w$, we obtain that the string $w \in L(G)=L(G_{norm})$ and that $P(s)=P(w)$. It remains to show that for every prefix $w'$ of $w$, if $|P(w)|-|P(w')|\leq k$, then $\delta_n(q_0, w')\not\in Q_n^S$.
    To this end, let $xy=w$ be the decomposition of $w$, where $x$ is the shortest prefix of $w$ such that $|P(w)|-|P(x)|\leq k$. Then, $x$ is either empty or ends with an observable event. Hence, by Lemma~\ref{lemma:delta_norm}, $\delta(q_0,x) = \delta_n(q_0,x) = q\in Q$ in $G$.
    However, for every prefix $y'$ of $y$, the string $xy'$ is a prefix of $\tilde w$ satisfying $|P(\tilde w)|-|P(xy')|\le k$, and hence $\delta(q_0,xy')\notin Q_S$. In other words, the computation of $\delta(q,y)$ in $G$ does not go through a secret state, and therefore the same sequence of transitions exists in $G_{norm}$, that is, $\delta(q_0,xy') = \delta_n(q_0,xy') \notin Q_n^S=Q_S\cup Q'$.
    Since every prefix $w'$ of $w$ satisfying $|P(w)|-|P(w')|\leq k$ is of the form $w'=xy'$, where $y'$ is a prefix of $y$, we have shown that $\delta_n(q_0,w')\notin Q_n^S$, which was to be shown.

    To prove the other direction, we assume that $G$ is not $k$\nobreakdash-SSO with respect to $Q_S$ and $P$, and show that neither the $G_{norm}$ is $k$\nobreakdash-SSO with respect to $Q_n^S$ and $P$.
    To this end, let $s\in L(G)$ be a string violating $k$\nobreakdash-SSO in $G$; that is, for every $w\in L(G)$ such that $P(s)=P(w)$, there exists a prefix $w'$ of $w$ such that $|P(w)|-|P(w')|\leq k$ and $\delta(q_0, w')=q_w\in Q_S$.
    However, by Lemma~\ref{lemma:delta_norm}, $w\in L(G_{norm})=L(G)$ and $\delta_n(q_0,w')\in\{q_w,q_w'\}$. Since both states $q_w,q_w'\in Q_n^S=Q_S\cup Q'$ are secret in $G_{norm}$, we conclude that $G_{norm}$ is not $k$\nobreakdash-SSO with respect to $Q_n^S$ and $P$.
  \end{proof}

\subsection{Weak versus Strong 0-Step Opacity}
  In this section, we discuss the relationship between strong $0$\nobreakdash-step opacity and weak $0$-step (current-state) opacity for normal deterministic DES. The following result characterizes the relationship between these two notions and fixes the claim of \cite{ma2021} stating that strong 0\nobreakdash-step opacity reduces to current\nobreakdash-state opacity, which is not the case as shown in the example of Figure~\ref{exampleFig}.
  
  \begin{thm}\label{thm:0-sso_0-so}
    A normal deterministic DES $G=(Q,\Sigma,\delta,q_0)$ is strongly $0$\nobreakdash-step opaque with respect to $Q_S$ and $P$ if and only if $G$ is weakly $0$-step opaque with respect to $Q_S$, $Q_{NS}=Q-Q_{S}$, and $P$.
  \end{thm}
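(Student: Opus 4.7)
The plan is to prove each direction separately and note that only the reverse direction genuinely uses normality.

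For the forward direction ($\Rightarrow$), the argument is essentially immediate. Suppose $G$ is strongly $0$-step opaque and let $s\in L(G)$ with $\delta(q_0,s)\in Q_S$. By strong $0$-SO, there exists $w\in L(G)$ with $P(w)=P(s)$ such that every prefix $w'$ of $w$ with $|P(w)|-|P(w')|\le 0$ satisfies $\delta(q_0,w')\notin Q_S$. Taking $w'=w$ itself, we obtain $\delta(q_0,w)\in Q-Q_S=Q_{NS}$, which witnesses weak $0$-step opacity of $s$. Notice that this direction requires neither normality nor determinism beyond well-definedness.

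For the reverse direction ($\Leftarrow$), let $s\in L(G)$ be arbitrary. The goal is to produce a witness $w$ for strong $0$-SO. I distinguish two cases. If $\delta(q_0,s)\in Q_S$, apply weak $0$-SO to obtain $s'\in L(G)$ with $P(s')=P(s)$ and $\delta(q_0,s')\in Q_{NS}$, and set $w=s'$; otherwise set $w=s$. In both cases $w\in L(G)$, $P(w)=P(s)$, and $\delta(q_0,w)\in Q_{NS}$. It remains to check that every prefix $w'$ of $w$ with $|P(w)|-|P(w')|=0$ also ends in a nonsecret state.

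The main obstacle (and the only place where normality is used) is exactly this last check. Write $w=u\,v_1v_2\cdots v_r$ where $u$ is empty or ends in an observable event and each $v_i\in\Sigma_{uo}$; the prefixes with the same projection length as $w$ are precisely $u$ and $u\,v_1\cdots v_i$ for $1\le i\le r$. Suppose for contradiction that $\delta(q_0,u\,v_1\cdots v_j)\in Q_S$ for some $j$ (with $j=0$ meaning $\delta(q_0,u)\in Q_S$). Since $v_{j+1},\ldots,v_r$ are all unobservable and $G$ is normal, Lemma~\ref{lemma:normalization}(2) applied inductively along $v_{j+1},\ldots,v_r$ forces $\delta(q_0,w)\in Q_S$, contradicting our choice of $w$. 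Hence every relevant prefix maps into $Q_{NS}$, and $w$ witnesses strong $0$-step opacity of $s$.

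Together, the two directions give the claimed equivalence; the example in Figure~\ref{exampleFig} confirms that normality is essential, since without it an unobservable transition from a secret to a nonsecret state allows the system to be current-state opaque but not strongly $0$-step opaque.
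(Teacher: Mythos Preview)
Your proof is correct and follows essentially the same approach as the paper. The only structural difference is that the paper proves the reverse direction by contrapositive (assuming $G$ is not $0$-SSO and exhibiting a violation of $0$-SO), whereas you give a direct construction of the witness $w$; the key technical step---using normality to conclude that once an unobservable tail enters $Q_S$ it cannot return to $Q_{NS}$---is identical in both arguments. One minor remark: your citation of Lemma~\ref{lemma:normalization}(2) is slightly off, since that item is stated for $G_{norm}$ rather than for an arbitrary normal $G$; what you actually use (and correctly argue inductively) is the defining property of normality itself, namely that no single unobservable transition goes from $Q_S$ to $Q_{NS}$.
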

  \begin{proof}
    We first assume that $G=(Q,\Sigma,\delta,q_0)$ is $0$\nobreakdash-SSO with respect to $Q_S$ and $P$. To show that $G$ is $0$\nobreakdash-SO with respect to $Q_S$ and $P$, let $st\in L(G)$ be such that $|P(t)|\le 0$ and $\delta(q_0, s)\in Q_S$. Since $st\in L(G)$ and $G$ is deterministic, $\delta(q_0,st)$ is defined. Therefore, we need to show that there is a string $s't' \in L(G)$ such that $P(s)=P(s')$, $P(t)=P(t')$, and $\delta(q_0,s')\in Q_{NS}=Q-Q_{S}$. However, since $G$ is $k$\nobreakdash-SSO with respect to $Q_S$ and $P$, there is a string $w\in L(G)$ such that $P(w)=P(st)$ and, for every prefix $w'$ of $w$ with $|P(w)|-|P(w')|= 0$, $\delta(q_0,w')\in Q-Q_{S}$. 
    Let $w'$ be any, but fixed, such prefix of $w$. We set $s'=w'$ and $s't'=w$. Then, $P(s') = P(w') = P(w) = P(st) = P(s)$, $P(t')=\eps=P(t)$, and $\delta(q_0,s')=\delta(q_0,w')\in Q-Q_{S}=Q_{NS}$. Thus, we have shown that $G$ is $0$\nobreakdash-SO with respect to $Q_S$ and $P$.\footnote{Notice that the proof does not hold if we admit neutral states, that is, $Q_{NS}\neq Q-Q_{S}$. However, it is not the case in this paper.}
 
    For the other direction, we assume that $G$ is not $0$\nobreakdash-SSO with respect to $Q_S$ and $P$. To show that $G$ is neither $0$\nobreakdash-SO with respect to $Q_S$ and $P$, we need to find a string $st\in L(G)$ with $|P(t)|\le 0$ and $\delta(q_0, s)\in Q_S$ such that, for every string $s't' \in L(G)$ with $P(s)=P(s')$ and $P(t)=P(t')$, the state $\delta(q_0,s')\in Q_{S}$.
    However, from the assumption that $G$ is not $0$\nobreakdash-SSO with respect to $Q_S$ and $P$, we have a string $st\in L(G)$ such that $\delta(q_0, s)\in Q_S$, $|P(t)|\le 0$, and, for every string $w\in L(G)$ with $P(st)=P(w)$, there is a prefix $w'$ of $w$ such that $|P(w)|-|P(w')|=0$ and $\delta(q_0, w')\in Q_S$.  
    To complete the proof, we show that for every $s't' \in L(G)$ with $P(s)=P(s')$ and $P(t)=P(t')$, the state $\delta(q_0,s')$ is secret. 
    To this end, let $xy=s't'$ be the decomposition of $s't'$ such that $y$ is the longest suffix of $s't'$ consisting only of unobservable events. Notice that $x$ is a prefix of $s'$, because $P(t')=P(t)=\eps$. Since $P(st)=P(x)$, there must be a prefix $x'$ of $x$ such that $|P(x)|-|P(x')|=0$ and $\delta(q_0,x') \in Q_S$. However, the last event of $x$ is observable, and hence the only prefix $x'$ of $x$ for which $|P(x)|-|P(x')|=0$ is $x'=x$, and therefore $\delta(q_0,x) = \delta(q_0,x') \in Q_S$. Since $G$ is normal, there are no nonsecret states reachable from the secret state $\delta(q_0,x)$ under a sequence of unobservable events. In particular, $\delta(q_0,s')=\delta(q_0,xy')\in Q_S$, where $y'$ is a prefix of $y$ for which $s'=xy'$; recall that $y$ is the longest suffix of $s't'$ consisting only of unobservable events. 
    Thus, we have shown that $G$ is not $0$-SO.
  \end{proof}

\subsection{Weak versus Strong {\it k}-Step Opacity}
  In this section, we show how to reduce strong $k$-step opacity to weak $k$\nobreakdash-step opacity. In the construction, we assume that $G$ is a normal deterministic DES. By Lemma~\ref{lemma:normalization}, this assumption is without loss of generality, because if $G$ is not normal, then we can consider $G_{norm}$ instead.

  \begin{construction}\label{consSSOtoSO}
    Let $G=(Q, \Sigma, \delta, q_0)$ be a normal deterministic DES, $P\colon\Sigma^*\to \Sigma_o^*$ be the observation projection, and $Q_{S}$ be the set of secret states. We construct
    \[
      G'=(Q\cup Q_{NS}',\Sigma\cup\{u\},\delta', q_0)
    \]
    as a disjoint union of $G$ and $G_{ns}=(Q_{NS}',\Sigma,\delta_{ns},q_0')$, where $G_{ns}$ is obtained from $G$ by removing all secret states and corresponding transitions, and $Q_{NS}'=\{q' \mid q\in Q_{NS}\}$ is a copy of $Q_{NS}$ disjoint from $Q$.
    We use a new unobservable event $u$ to connect $G_{ns}$ to $G$ so that we 
      initialize $\delta':=\delta\cup\delta_{ns}$ and
      extend $\delta'$ by additional transitions $(q,u,q')$ for every $q\in Q_{NS}$,
     cf. Figure~\ref{fig:ksso-kso} for an illustration.
    The states of $Q_{NS}'$ are the only nonsecret states of $G'$, that is, the set of secret states of $G'$ is the set $Q_S'=Q$. Finally, we define the projection $P'\colon (\Sigma\cup\{u\})^* \to \Sigma_o^*$.
    \hfill$\diamond$
  \end{construction}

  \begin{figure}
    \centering
    \includegraphics[align=c,scale=1]{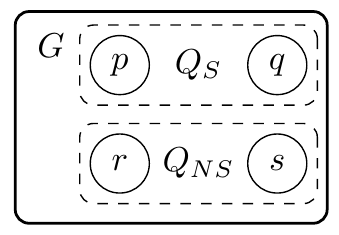}
    $\Longrightarrow$
    \includegraphics[align=c,scale=1]{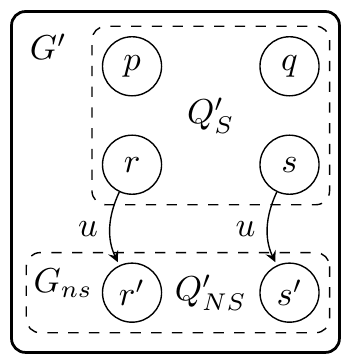}
    \caption{An illustration of Construction~\ref{consSSOtoSO} transforming strong $k$\nobreakdash-step opacity to weak $k$\nobreakdash-step opacity.}
    \label{fig:ksso-kso}
  \end{figure}

  The following theorem describes the relationship between strong $k$-step opacity and weak $k$-step opacity, and justifies the correctness of Algorithm~\ref{alg2} below.
  
  \begin{thm}\label{thm_ksso-kso}
    Let $G=(Q,\Sigma,\delta,q_0)$ be a normal deterministic DES, and let $G'$ be the DES obtained from $G$ by Construction~\ref{consSSOtoSO}. Then, $G$ is strongly $k$\nobreakdash-step opaque with respect to $Q_S$ and $P$ if and only if $G'$ is weakly $k$\nobreakdash-step opaque with respect to $Q_S'$, $Q_{NS}'$, and $P'$, where $Q_S'$, $Q_{NS}'$, and $P'$ are defined in Construction~\ref{consSSOtoSO}.
  \end{thm}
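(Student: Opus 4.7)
The plan is to exploit a simple structural observation about $G'$. The event $u$ is unobservable, its only occurrences lead from $Q_{NS}$ into the copy $Q_{NS}'$, and $Q_{NS}'$ has no outgoing $u$-transitions and is closed under $\Sigma$-transitions. Hence any run of $G'$ either stays entirely in $Q$ using $\Sigma$-events, or it uses exactly one $u$-transition (from some state in $Q_{NS}$) and thereafter stays in $Q_{NS}'$. In particular, for every $v\in(\Sigma\cup\{u\})^*$, the singleton $\delta'(q_0,v)$ meets $Q_S'=Q$ iff $v\in\Sigma^*\cap L(G)$, and meets $Q_{NS}'$ iff $v$ contains a $u$. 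I will use this dichotomy to translate between runs of $G$ and runs of $G'$ in both directions.

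For the forward direction, I would take a candidate witness $s't\in L(G')$ of the weak $k$-SO premise and first observe that $s'\in\Sigma^*$, so the string $\tilde s$ obtained from $s't$ by deleting its (at most one) $u$ belongs to $L(G)$ and satisfies $P(\tilde s)=P'(s't)$. Applying strong $k$-SSO of $G$ to $\tilde s$ produces $w\in L(G)$ with $P(w)=P(\tilde s)$ whose prefixes at observable distance at most $k$ from the end all lie in $Q_{NS}$. Choosing a prefix $w_1$ of $w$ with $P(w_1)=P(s')$, the condition $|P(w)|-|P(w_1)|=|P(t)|\le k$ forces $\delta(q_0,w_1)\in Q_{NS}$, so in $G'$ I may fire $u$ right after $w_1$ and then execute the suffix of $w$ after $w_1$ inside $G_{ns}$ (all its intermediate states stay in $Q_{NS}$ by strong $k$-SSO). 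The split $s''=w_1 u$, with $t''$ equal to that suffix, then yields the required witness for weak $k$-SO of $G'$ after a short check of projections.

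For the backward direction, I would start with $s\in L(G)$ violating strong $k$-SSO and split $s=s_1 s_2$ where $s_1$ is the \emph{shortest} prefix with $|P(s)|-|P(s_1)|\le k$; then $s'=s_1$ and $t=s_2$ in $G'$ trivially satisfy the weak $k$-SO premise. Suppose, toward a contradiction, that a weak $k$-SO witness $s''t''$ exists. The structural observation forces the decomposition $s''=s''_a\,u\,s''_b$ with $s''_a,s''_b,t''\in\Sigma^*$ and $\delta(q_0,s''_a)\in Q_{NS}$, so $w:=s''_a s''_b t''\in L(G)$ with $P(w)=P(s)$, and every prefix of $w$ extending $s''_a$ lies in $Q_{NS}$. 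The $k$-SSO violation then forces a prefix $w'$ of $w$ strictly shorter than $s''_a$ with $\delta(q_0,w')\in Q_S$ and $|P(w)|-|P(w')|\le k$. A short projection-length count---using the minimality of $s_1$ to bound $|P(w')|\ge|P(s_1)|$ from below, and the projection match $|P'(s'')|=|P(s_1)|$ to bound $|P(s''_a)|\le|P(s_1)|$ from above---pins $|P(w')|=|P(s''_a)|$. Hence the events strictly between $w'$ and $s''_a$ are all unobservable, giving an unobservable path from the secret state $\delta(q_0,w')$ to the nonsecret state $\delta(q_0,s''_a)$, which contradicts normality of $G$.

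The main obstacle I anticipate is exactly this observable-length pinching in the backward direction: the split of $s$ into $s_1 s_2$ and the structural decomposition of $s''$ must cooperate so that the hypothetical secret prefix $w'$ cannot avoid the same projection length as the jump point $s''_a$, for only then does the normality hypothesis on $G$ bite. The forward direction, while requiring some bookkeeping around the single $u$-transition, is essentially a direct translation of a strong $k$-SSO witness of $G$ into a run of $G'$ that jumps into $G_{ns}$ at the right position.
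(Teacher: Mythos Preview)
Your proof is correct and follows essentially the same strategy as the paper: both directions translate between runs of $G$ and runs of $G'$ via the structural dichotomy you describe, and the backward direction hinges on the same projection-length pinching that forces an unobservable path from a secret to a nonsecret state, contradicting normality. Your backward argument is in fact slightly cleaner than the paper's, which splits into two cases according to whether $\delta(q_0,P^{-1}P(s))\cap Q_{NS}$ is empty; your unified contradiction argument subsumes both cases at once.
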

  \begin{proof}
    For the first implication, we assume that $G$ is $k$\nobreakdash-SSO with respect to $Q_S$ and $P$, and we show that $G'$ is $k$-SO with respect to $Q_S'$, $Q_{NS}'$, and $P'$. To this end, let $st \in L(G')$ be such that $|P'(t)| \leq k$ and $\delta'(q_0,s)\in Q_S'$. We need to show that there is a string $s't' \in L(G')$ such that $P'(s)=P'(s')$, $P'(t)=P'(t')$, and $\delta'(q_0,s')\in Q_{NS}'$.

    Let $P_u$ denote the projection that removes every occurrence of event $u$, that is, $P_u(a)=a$ for $a\in \Sigma$, and $P_u(u)=\eps$. 
    We first show that $P_u(st)\in L(G)$. Indeed, if $st$ does not contain $u$, then $P_u(st)=st\in L(G)$. 
    If $st$ contains $u$, then, by the construction of $G'$, any string of $L(G')$ contains at most one occurrence of $u$. Since $\delta'(q_0,s)\in Q_S'$, we have that $u$ occurs in $t$. Let $st=st_1ut_2$. Then, there are states $p,r\in Q$ in $G'$ such that $\delta'(q_0,st_1) = p$, $\delta'(p,u) = p'$, and $\delta'(p',t_2) = r'$. However, by the construction, this means that $\delta(q_0,st_1)=p$ and $\delta(p,t_2)=r$ in $G$, and hence $P_u(st)=st_1t_2\in L(G)$.

    Since $G$ is $k$-SSO with respect to $Q_S$ and $P$, there exists a string $w\in L(G)$ such that $P(P_u(st))=P(w)$ and, for every prefix $w'$ of $w$, if $|P(w)|-|P(w')|\leq k$, then $\delta(q_0, w')\notin Q_S$. 
    Since $P'(st)=P(P_u(st))=P(w)$, we define $xy=w$ to be a (fixed) decomposition of $w$ such that $P'(s)=P(x)$ and $P'(t)=P(y)$. Then, $|P(w)|-|P(x)| = |P'(st)|-|P'(s)|=|P'(t)| \le k$, which implies that $\delta(q_0,x) = \delta'(q_0,x) = q$ for some state $q$ that is not secret in $G$. Therefore, the transition $(q,u,q')\in \delta'$, and hence $\delta'(q_0,xu)=q' \in Q_{NS}'$. 
    Since the state $\delta(q_0,xy')\notin Q_S$ for every prefix $y'$ of $y$, because $xy'$ is a prefix of $w$ with $|P(w)|-|P(xy')|\le k$, the computation of $\delta(q,y)$ in $G$ does not go through a secret state. Therefore, the same sequence of transitions is enabled in $G'$ from state $q'$.
    Setting now $s'=xu$ and $t'=y$ implies that $P'(s)=P'(s')$, $P'(t)=P'(t')$,  $\delta'(q_0,s')\in Q_{NS}'$, and $\delta'(q_0,s't')$ is defined, which proves that $G'$ is weakly $k$\nobreakdash-step opaque.

    To prove the other direction, we assume that $G$ is not $k$\nobreakdash-SSO with respect to $Q_S$ and $P$, and we show that $G'$ is not $k$-SO with respect to $Q_S'$, $Q_{NS}'$, and $P'$. To this end, we need to show that there is a string $st \in L(G')$ such that $|P'(t)|\leq k$, $\delta'(q_0,s)\in Q_S'$, and for every $s't' \in L(G')$ such that $P'(s)=P'(s')$ and $P'(t)=P'(t')$, the state $\delta'(q_0,s')\notin Q_{NS}'$.
 
    However, since $G$ is not $k$-SSO with respect to $Q_S$ and $P$, there exists a string $v\in L(G)$ such that, for every string $w\in L(G)$ with $P(w)=P(v)$, there is a prefix $w'$ of $w$ such that $|P(w)|-|P(w')|\leq k$ and $\delta(q_0,w')\in Q_S$. In particular, there is a prefix $v'$ of $v$ such that $|P(v)|-|P(v')|\leq k$ and $\delta(q_0,v')\in Q_S$.

    Let $xy=v$ be the decomposition of $v$ such that $y$ is the longest suffix of $v$ containing at most $k$ observable events. We set $s=x$ and $t=y$, for which we have that $|P(t)|\leq k$, $\delta'(q_0,st)$ is defined, and, since neither $s$ nor $t$ contains the event $u$, the state $\delta'(q_0,s)\in Q_S'$. It remains to show that for every string $s't'\in L(G')$ with $P'(s') = P'(s)$ and $P'(t')= P'(t)$, the state $\delta'(q_0,s')\notin Q_{NS}'$. We distinguish two cases. 
  
    In the first case, we assume that $\delta(q_0,P^{-1}P(s))\cap Q_{NS}=\emptyset$, and we consider any string $s't'\in L(G')$ such that $P'(s')=P'(s)$ and $P'(t')=P'(t)$. 
    If $s'$ does not contain the event $u$, then $s'\in P^{-1}P(s)$, and therefore $\delta'(q_0,s') = \delta(q_0,s') \in Q_S\subseteq Q_S'$.
    If, on the other hand, $s'$ contains the event $u$, then $s'=s_1us_2$ where neither $s_1$ nor $s_2$ contains the event $u$. But then $\delta'(q_0,s')=r'$, where $r'$ is a copy of $r=\delta(q_0,s_1s_2)$. Since $s_1s_2\in P^{-1}P(s)$, the state $r=\delta(q_0,s_1s_2)\in Q_S$, and hence $r' \notin Q_{NS}'$ by construction. In both cases, $\delta'(q_0,s')\notin Q_{NS}'$, which was to be shown.

    In the second case, let $\delta(q_0,P^{-1}P(s))\cap Q_{NS} = Z \neq \emptyset$, and consider any string $s't'\in L(G')$ with $P'(s')=P'(s)$ and $P'(t')=P'(t)$. Using the projection $P_u$ removing the event $u$, we set $z:=P_u(s't')\in L(G)$. Recall that the string $st$ does not contain the event $u$, that is, $P'(st)=P(st)$, and therefore $P(z) = P(P_u(s't')) = P'(s't') = P'(st) = P(st) = P(v)$. Since $G$ is not $k$-SSO with respect to $Q_S$ and $P$, there is a prefix $z'$ of $z$ such that $|P(z)|-|P(z')|\leq k$ and $q_s:=\delta(q_0,z')\in Q_S$. In particular, by the choice of $s$, we have that $|P(s)|\le |P(z')|$.
    Furthermore, $G$ is normal, and hence there is no nonsecret state reachable from the secret state $q_s$ by a sequence of unobservable events.
     
    In particular, the prefix $P_u(s')$ of the string $z=P_u(s')P_u(t')$ satisfies $P(P_u(s')) = P'(s') = P'(s)=P(s)$, where the last equality comes from the fact that $s$ does not contain the event $u$. Then $P_u(s') \in P^{-1}P(s)$, and hence if $\delta(q_0,P_u(s'))\in Q_{NS}$, then $\delta(q_0,P_u(s'))\in Q_{NS}\cap Z$.
    Thus, assume that $\delta(q_0,P_u(s'))\in Q_{NS}\cap Z$. Then, the string $P_u(s')$ is a strict prefix of $z'$; 
    otherwise, if $z'$ was a strict prefix of $P_u(s')$, then we would have that $|P(z')|\le |P(P_u(s'))| = |P(s)|$, which, together with $|P(s)|\le |P(z')|$, would give that $|P(z')|=|P(P_u(s'))|=|P(s)|$, and hence the nonsecret state $q_{ns} = \delta(q_0,P_u(s'))$ would be reachable from the secret state $q_s$ by a sequence of unobservable events, which is a contradiction with the normality of $G$.
    Consequently, generating the string $P_u(t')$ from the state $q_{ns}$, $G$ must go through the secret state $q_s$. In other words, $q_s$ is reachable from the state $q_{ns}$ by a prefix of $P_u(t')$.

    Thus, in $G'$, state $\delta'(q_0,s')\in \{q_{ns},q_{ns}'\}$, where $q_{ns}'\in Z'=\{q'\mid q\in Z\}\subseteq Q_{NS}'$. 
    If $\delta'(q_0,s')=q_{ns} \in Q_S'$, we are done. 
    If $\delta'(q_0,s')=q_{ns}' \in Q_{NS}'$, we show that $\delta'(q_{ns}',t')$ is undefined, which contradicts the assumption that $s't' \in L(G')$, and hence $\delta'(q_0,s')=q_{ns}' \in Q_{NS}'$ cannot happen. Indeed, if $\delta'(q_0,s')\in Q_{NS}'$, then $P_u(t')=t'$. Since the computation of $\delta(q_{ns},t')=\delta(q_{ns},P_u(t'))$ in $G$ goes through the secret state $q_s$, the computation of $\delta'(q_{ns}',t')$ in $G'$ has to go through the state $q_s'$, which is the primed copy of the state $q_s$. But the computation $\delta'(q_{ns}',t')$ is performed in the automaton $G_{ns}$, which is obtained from $G$ by removing all secret states and corresponding transitions. Since $q_s'$ is a copy of a secret state, it does not exist in $G_{ns}$, and hence it does not belong to $Q_{NS}'$. Therefore, $\delta'(q_{ns}',t')$ is undefined.
    
    We have thus shown that $G'$ is not $k$\nobreakdash-step opaque.
  \end{proof}

  Both Construction~\ref{consNorm} and Construction~\ref{consSSOtoSO} are polynomial and preserve the number of observable events. Therefore, Theorem~\ref{thm_ksso-kso} and the existing results \citep[Table~1]{BalunMasopust2021} immediately imply the following result.
  
  \begin{thm}
    Deciding whether a given deterministic DES $G$ is strongly $k$-step opaque is (i) \coNP-complete if $G$ has only a single observable event, and (ii) \PSpace-complete if $G$ has at least two observable events.
    \hfill$\qed$
  \end{thm}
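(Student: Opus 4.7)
The plan is to obtain the theorem by composing the two polynomial-time constructions developed in this section with the known complexity classification of weak $k$-step opacity recorded in Table~1 of~\cite{BalunMasopust2021}. Given an instance $(G,Q_S,P,k)$ of strong $k$-step opacity, I first normalize $G$ via Construction~\ref{consNorm} to obtain $G_{norm}$; by Lemma~\ref{lemma:normalization}(3) this preserves the strong $k$-SO property, and by inspection the construction is polynomial in $|G|$ and leaves the alphabet $\Sigma_o$ untouched. I then apply Construction~\ref{consSSOtoSO} to the normal DES $G_{norm}$ to produce $G'$; by Theorem~\ref{thm_ksso-kso}, $G_{norm}$ is strongly $k$-SO if and only if $G'$ is weakly $k$-SO with respect to $Q_S'$, $Q_{NS}'$ and $P'$. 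This second step is also polynomial and adds only a single new \emph{unobservable} event $u$, so the count of observable events is once again preserved.

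The crucial observation, and the reason the theorem splits into cases (i) and (ii), is that the composed reduction maps single-observable-event instances to single-observable-event instances and instances with at least two observable events to instances with at least two observable events. Plugging this polynomial-time many-one reduction from strong to weak $k$-step opacity into the upper bounds for weak $k$-step opacity listed in Table~1 of~\cite{BalunMasopust2021} --- \coNP membership for one observable event and \PSpace membership for at least two --- immediately yields the upper bounds of statements (i) and (ii).

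For the matching hardness, I appeal to the lower bounds for $k$-step opacity already stated in the same Table~1 of~\cite{BalunMasopust2021}. The main obstacle I anticipate here is bookkeeping rather than conceptual: one must verify that the hardness reductions cited there produce deterministic DESs satisfying $Q_{NS}=Q-Q_S$ (and ideally already normal, so that Construction~\ref{consNorm} is not needed to break the reduction chain), and that the \coNP-hardness is genuinely tied to the one-observable-event regime rather than subsumed by the stronger \PSpace-hardness available when $|\Sigma_o|\ge 2$. Once these checks are confirmed, combining the transferred hardness with the upper bounds above yields the claimed \coNP- and \PSpace-completeness results.
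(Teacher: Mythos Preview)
Your upper-bound argument matches the paper exactly: the paper's entire proof is the sentence preceding the theorem, which observes that Constructions~\ref{consNorm} and~\ref{consSSOtoSO} are polynomial and preserve the number of observable events, so Theorem~\ref{thm_ksso-kso} together with Table~1 of~\cite{BalunMasopust2021} yields the claim.

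Your hardness argument, however, has the reduction running in the wrong direction. Constructions~\ref{consNorm} and~\ref{consSSOtoSO} reduce strong $k$-SO \emph{to} weak $k$-SO; this transfers membership (upper bounds), not hardness. To obtain \coNP- or \PSpace-hardness for strong $k$-SO you need a reduction \emph{from} a hard problem \emph{to} strong $k$-SO, and the chain you have assembled does not provide one. Your proposed ``bookkeeping''---checking that the weak-$k$-SO hardness instances are deterministic, normal, and satisfy $Q_{NS}=Q-Q_S$---does not close this gap: even for such instances the paper establishes the coincidence of weak and strong $k$-step opacity only at $k=0$ (Theorem~\ref{thm:0-sso_0-so}), not for arbitrary $k$, so you cannot conclude that a hard weak-$k$-SO instance is automatically a hard strong-$k$-SO instance. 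The paper is equally terse on this point and simply defers to the cited table; the intended reading is presumably that the hardness reductions catalogued there (for current-state opacity and related notions) already apply to strong $k$-SO---for instance via Theorem~\ref{thm:0-sso_0-so} with $k=0$ on the produced normal instances, or because the concrete gadgets used make the weak and strong notions agree---but this is a substantive step you should spell out rather than label as bookkeeping.
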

  
  The motivation to consider the case of a single observable event comes from the timed discrete-event systems framework of \cite{BrandinW1994}, where the only observable event is the tick of the global clock.

\subsection{Verifying Strong {\it k}-Step Opacity}
  Theorem~\ref{thm_ksso-kso} gives us a clue how to verify strong $k$-step opacity of a given deterministic DES with the help of the verification algorithm for weak $k$-step opacity from Section~\ref{ksoAlgo}. This idea is formulated as Algorithm~\ref{alg2}. 
  Before analyzing the complexity of Algorithm~\ref{alg2} and comparing it with the existing algorithms, we illustrate it by two examples. 
  
  \begin{algorithm}\hrule height .08em\vspace{-5pt}
    \caption{Verification of strong $k$-step opacity}\label{alg2}
    \begin{algorithmic}[1]
      \vspace{2pt}\hrule\vspace{3pt}
      \Require A deterministic DES $G=(Q,\Sigma,\delta,q_0)$, $Q_S\subseteq Q$, $\Sigma_o\subseteq \Sigma$, and $k\in\mathbb{N}_\infty$.

      \Ensure {\tt true} if and only if $G$ is strongly $k$-step opaque with respect to $Q_S$ and $P\colon \Sigma^*\to\Sigma_o^*$

      \State Let $G_{norm}$ be the normalization of $G$ by Construction~\ref{consNorm}
      
      \State Transform $G_{norm}$ to $G'$ by Construction~\ref{consSSOtoSO}
      
      \State Use Algorithm~\ref{alg1} on $G'$ with the set of secret states $Q_S'$, the set of nonsecret states $Q_{NS}'$, observable events $\Sigma_o$, and $k$
      
      \State \Return the answer of Algorithm~\ref{alg1}
    \end{algorithmic}
    \hrule height .08em
  \end{algorithm}

  \begin{figure}[b]
    \centering
    \includegraphics[align=c,scale=0.63]{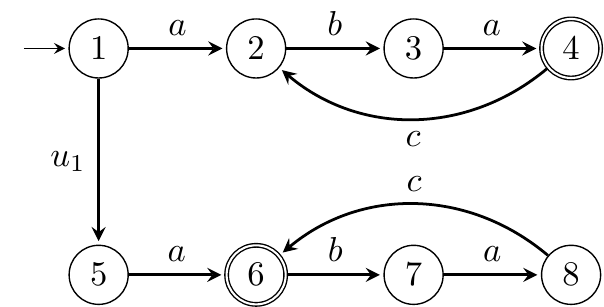}
    $\Longrightarrow$
    \includegraphics[align=c,scale=0.63]{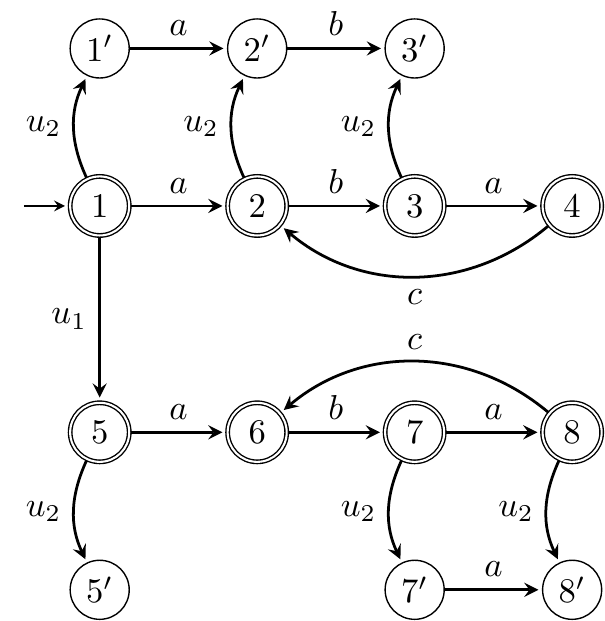}
    \caption{A DES $G$ (left) and $G'$ (right), which is the result of Construction~\ref{consSSOtoSO}.}
    \label{ex:ksso-2}
  \end{figure}

  As the first example, we adopt the DES $G$ from \cite{Falcone2014} depicted in Figure~\ref{ex:ksso-2} (left), where events $a$, $b$, $c$ are observable, event $u_1$ is unobservable, and states $4$ and $6$ are secret. \cite{Falcone2014} claimed that $G$ is strongly one-step opaque. However, it is not the case, as we show by using our transformation to weak $k$\nobreakdash-step opacity.
   
  Since $G$ is normal, Algorithm~\ref{alg2} proceeds directly to the application of Construction~\ref{consSSOtoSO}, which results in the DES $G'$ depicted in Figure~\ref{ex:ksso-2} (right). Namely, $G'$ was constructed from $G$ by adding six new nonsecret states and one new unobservable event $u_2$, and by making states $1$ through $8$ secret, that is, $Q_{S}'=\{1,2,3,4,5,6,7,8\}$ and $Q_{NS}'=\{1',2',3',5',7',8'\}$.
  Applying Algorithm~\ref{alg1} to $G'$, $Q_{S}'$, $Q_{NS}'$, $\Sigma_o=\{a,b,c\}$, and $k=1$ results in the observer $G'^{obs}$ of $G'$ and the automaton $H$ depicted in Figure~\ref{ex:ksso-2-obs}.
  \begin{figure}
    \centering
    \includegraphics[align=c,scale=0.67]{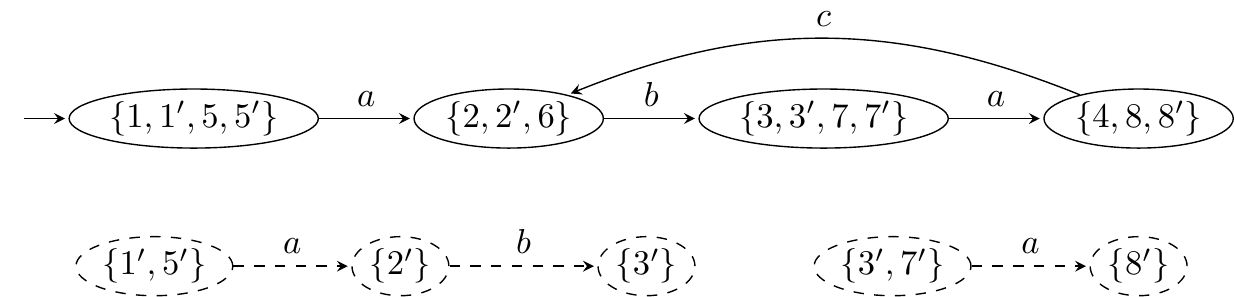}
    \caption{The observer $G'^{obs}$ of $G'$, the solid part. The automaton $H$ forming the relevant part of the full observer of $G'$ is obtained from $G'^{obs}$ by adding the dashed part; state $\emptyset$ and the transitions to it are not depicted.}
    \label{ex:ksso-2-obs}
  \end{figure}
  The set $Y$ and the part of $\C_1 = P(G') \times H$ reachable from the states of $Y$ in one step are depicted in Figure~\ref{ex:ksso-2-c}.
  Since, e.g., state $(2,\emptyset)$ is reachable in $\C_1$ in one step from the state $(4,\{8'\})\in Y$, $G'$ is not weakly one\nobreakdash-step opaque. By Theorem~\ref{thm_ksso-kso}, $G$ is not strongly one-step opaque.
  Indeed, observing the string $abac$ in $G$, the intruder reveals that $G$ is either in the secret state $6$ at that time instant or must have been in the secret state $4$ one step ago.

  \begin{figure}
    \centering
    \includegraphics[align=c,scale=0.62]{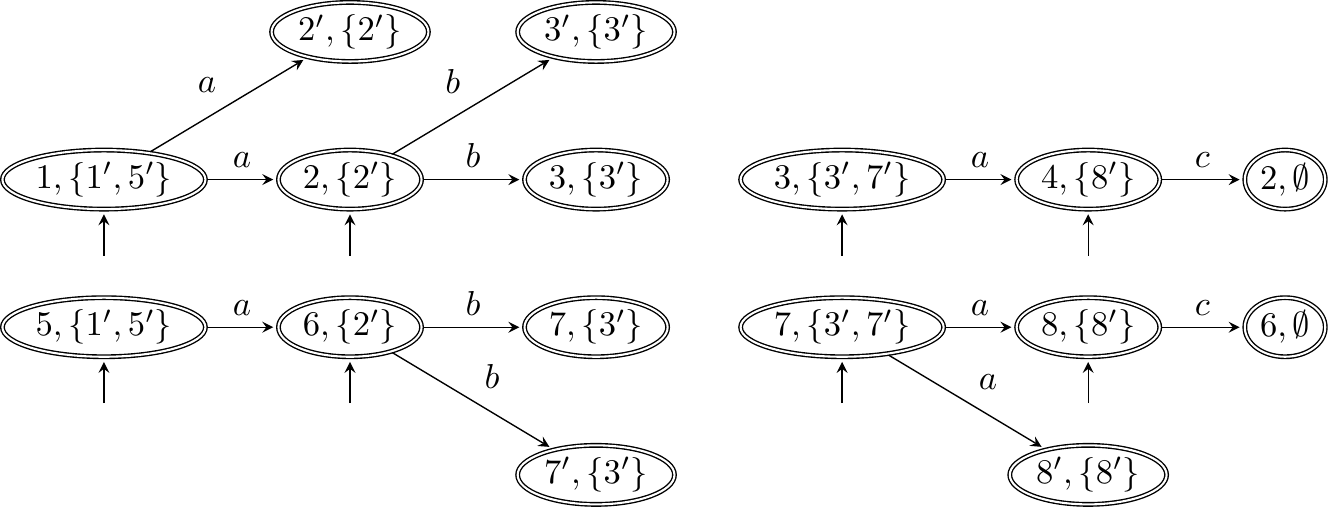}
    \caption{The part of $C_1$ consisting of states reachable from the states of $Y$ in one step. The states of $Y$ are denoted by little arrows.}
    \label{ex:ksso-2-c}
  \end{figure}

  As the second example, we consider the DES $\tilde G$ obtained from $G$ by replacing the transitions $(4,c,2)$ and $(8,c,6)$ by the transitions $(4,c,3)$ and $(8,c,7)$, respectively, see Figure~\ref{ex:ksso-1}. 
  \begin{figure}[b]
      \centering
      \includegraphics[align=c,scale=0.63]{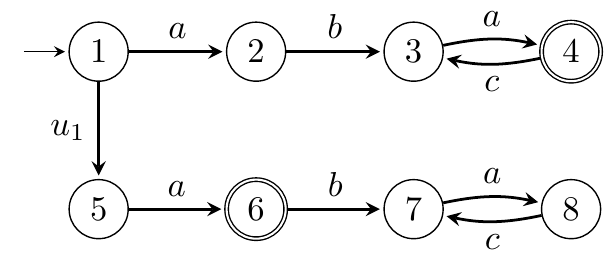}
      $\Longrightarrow$
      \includegraphics[align=c,scale=0.63]{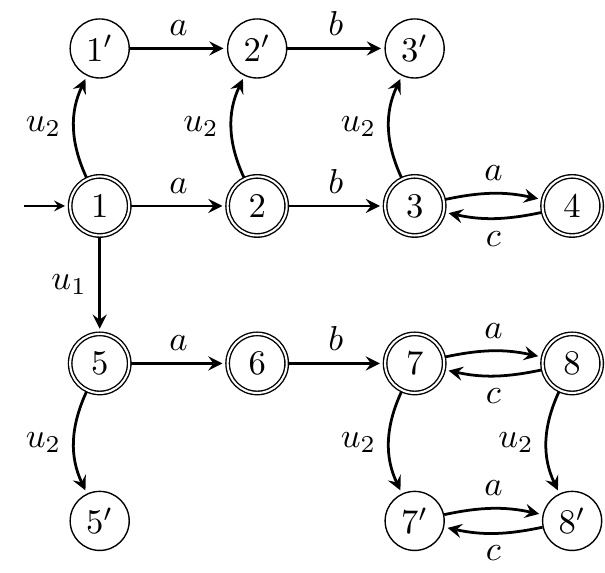}
    \caption{A DES $\tilde G$ (left) and the DES $\tilde G'$ obtained by Construction~\ref{consSSOtoSO} (right).}
    \label{ex:ksso-1}
  \end{figure}
  Then, Construction~\ref{consSSOtoSO} results in the DES $\tilde G'$ with $\tilde Q_S'=\{1,2,3,4,5,6,7,8\}$ and $\tilde Q_{NS}'=\{1',2',3',5',7',8'\}$. Applying Algorithm~\ref{alg1} to $\tilde G'$, $\tilde Q_{S}'$, $\tilde Q_{NS}'$, $\Sigma_o=\{a,b,c\}$, and $k=1$ results in the observer $\tilde{G}'^{obs}$ of $\tilde G'$ and the automaton $\tilde H$ depicted in Figure~\ref{ex:ksso-1-obs}. The set $\tilde Y$ and the part of $\C_2 = P(\tilde{G}') \times \tilde H$ reachable from the states of $Y$ in one step are depicted in Figure~\ref{ex:ksso-1-c}. Since no state of the form $(q,\emptyset)$ is reachable from a state of $\tilde Y$ in one step, $\tilde G'$ is weakly one\nobreakdash-step opaque. By Theorem~\ref{thm_ksso-kso}, $\tilde G$ is strongly one-step opaque. 

  \begin{figure}
    \centering
    \includegraphics[align=c,scale=0.67]{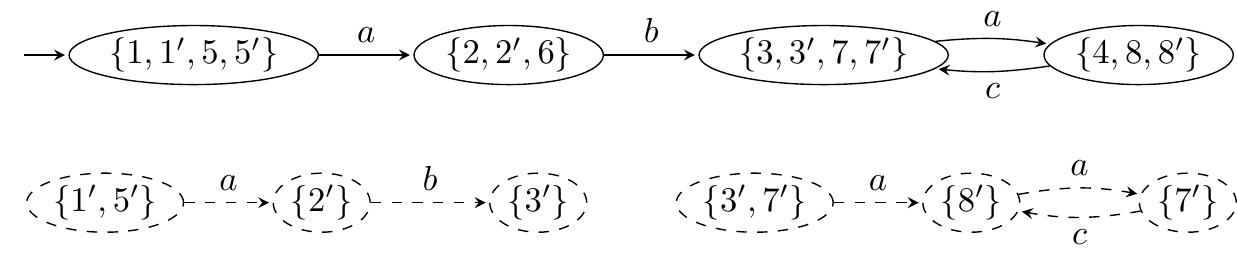}
    \caption{The observer $\tilde G'^{obs}$ of $\tilde G'$, the solid part. The automaton $\tilde H$ forming the relevant part of the full observer of $\tilde G'$ is obtained from $\tilde G'^{obs}$ by adding the dashed part; state $\emptyset$ and the transitions to it are not depicted.}
    \label{ex:ksso-1-obs}
  \end{figure}

  \begin{figure}
    \centering
    \includegraphics[align=c,scale=0.58]{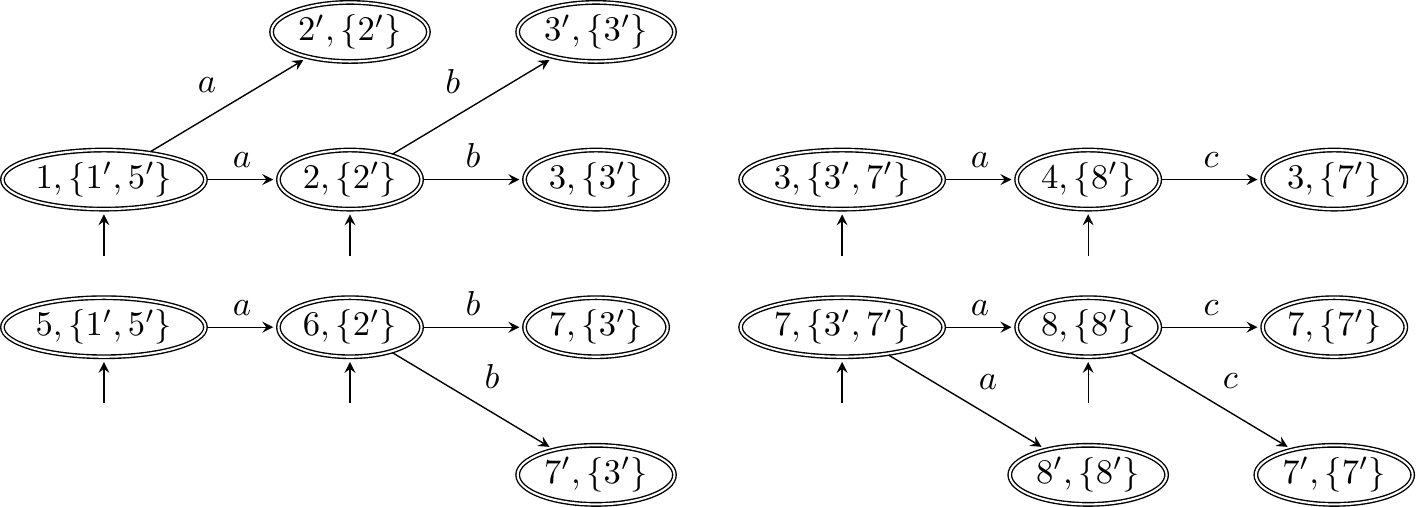}
    \caption{The part of $C_2$ consisting of states reachable from the states of $\tilde Y$ in one step. The states of $\tilde Y$ are denoted by little arrows.}
    \label{ex:ksso-1-c}
  \end{figure}

\subsection{Complexity Analysis of Algorithm~\ref{alg2}}
  Algorithms verifying strong $k$-step opacity have been investigated in the literature. In particular, \cite{Falcone2014} designed an algorithm based on a $k$-delay trajectory estimation, but they did not analyze its complexity. The complexity analyses in the literature are inconsistent. While \cite{ma2021} state that the complexity is $O(\ell 2^{n^2+n})$, where $n$ is the number of states and $\ell$ is the number of observable events of the verified deterministic DES, \cite{Wintenberg2021} state that the state complexity is $O((\ell+1)^k 2^n)$. According to \cite[Definition~7]{Falcone2014}, however, the $k$-delay trajectory estimator has $O(2^{n^{k+1} \cdot 2^{k}})$ states. 

  Recently, \cite{ma2021} designed another algorithm with complexity $O(\ell 2^{(k+2)n})$, and even more recently, \cite{Wintenberg2021} discussed and experimentally compared algorithms based on (i) the secret observer with complexity $O(\ell (k+3)^n)$, on (ii) the reverse comparison with complexity $O((n+m)(k+1)2^n)$, where $m\le \ell n^2$ is the number of transitions in the involved projected NFA, and on (iii) the construction of the $k$\nobreakdash-delay trajectory estimator of \cite{Falcone2014}, which they claim to be of complexity $O(\ell (\ell+1)^k 2^n)$.

  We now analyze the complexity of Algorithm~\ref{alg2} and show that its worst-case complexity is better than the complexity of existing algorithms. 
  Namely, we show that the space and time complexity of Algorithm~\ref{alg2} is $O(n2^n)$ and $O((n + m)2^n)$, respectively, where $n$ is the number of states of $G$ and $m$ is the number of transitions of $P(G)$. Notice that the complexity does not depend on the parameter $k$. 
  
  Before we prove this result, notice that $m \le \ell n^2$, where $\ell$ is the number of observable events. Since $\ell n^2$ is the maximum number of transitions in an $n$-state NFA with $\ell$ events, $m$ is often significantly smaller than $\ell n^2$.
  
  For a deterministic DES with $n$ states, Construction~\ref{consNorm} results in a normalized DES with up to $2n$ states, and hence it may seem that the observer of the normalized DES could have up to $2^{2n}$ states. The following lemma shows that the observer of the normalized DES has in fact at most $2^n$ states.
  
  \begin{lem}\label{lemma1}
    Let $G$ be an $n$-state deterministic DES, and let $G_{norm}$ be its normalization obtained by Construction~\ref{consNorm}. Then, the observer of $G_{norm}$ has at most $2^{n}$ states.
  \end{lem}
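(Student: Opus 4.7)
The plan is to prove that every reachable state of the observer of $G_{norm}$ can be written as the unobservable closure, taken inside $G_{norm}$, of some subset of $Q$. Since $Q$ has $2^n$ subsets, this at once yields the claimed bound of $2^n$ on the number of observer states of $G_{norm}$, even though $G_{norm}$ itself may have up to $2n$ states.

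The structural fact driving the argument is that in $G_{norm}$ every observable transition has its target in the unprimed set $Q$. To verify this, I would walk through the four steps of Construction~\ref{consNorm}: the initial assignment $\delta_n:=\delta$ has targets only in $Q$; step~(1) alters only unobservable transitions; step~(2) only adds unobservable transitions between primed states; and step~(3) explicitly adds observable transitions of the form $(q',a,r)$ with $r\in Q$, mirroring $(q,a,r)\in\delta$. Hence for every $X\subseteq Q\cup Q'$ and every $a\in\Sigma_o$, we have $\delta_n(X,a)\subseteq Q$.

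With this in hand, the argument is a routine induction on the length of an observable string $s$ reaching an observer state. The base case $s=\varepsilon$ is immediate: the initial observer state is the unobservable closure of $\{q_0\}\subseteq Q$. For the inductive step, if a reachable observer state $X_n$ has the required form, then its successor under an observable event $a\in\Sigma_o$ is the unobservable closure of $\delta_n(X_n,a)$, which, by the structural fact, is the unobservable closure of a subset of $Q$. Consequently, sending each subset $S\subseteq Q$ to the unobservable closure of $S$ in $G_{norm}$ defines a surjection from $2^Q$ onto the set of reachable observer states of $G_{norm}$, giving the bound $2^n$. No step presents a real obstacle; the only thing that needs care is the case analysis verifying that Construction~\ref{consNorm} never introduces an observable transition targeting a primed state, which is visible directly in the statement of the construction.
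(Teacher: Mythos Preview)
Your proposal is correct and follows essentially the same approach as the paper: both arguments rest on the observation that every observable transition of $G_{norm}$ targets a state in $Q$, so each observer state is the unobservable reach of some subset of $Q$. The only cosmetic differences are that the paper invokes item~(\ref{it2}) of Lemma~\ref{lemma:delta_norm} for this structural fact (rather than re-inspecting Construction~\ref{consNorm} directly) and packages the counting as an explicit injection $X_{obs}\to 2^{Q}$ rather than your surjection $2^{Q}\to X_{obs}$.
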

  \begin{proof}
    Let $G=(Q,\Sigma, \delta, q_0)$ be a deterministic DES with $n$ states, and let $\Sigma_{uo}$ be the set of unobservable events. The application of Construction~\ref{consNorm} on $G$ results in the deterministic DES $G_{norm}=(Q_n,\Sigma, \delta_{n}, q_0)$, 
    where $Q_n \subseteq Q\cup Q'$ and $Q'=\{q' \mid q\in Q\}$ is a disjoint copy of $Q$. All states of $Q_n$ are reachable in $G_{norm}$ by construction.
    The observer $G_{norm}^{obs}=(X_{obs},\Sigma_o,\delta_{obs},x_0)$ of $G_{norm}$ is defined as follows. The set of states is the subset of the power set of $Q_n$, namely, $X_{obs} \subseteq 2^{Q_n}$. The initial state is the unobservable reach (UR) of the initial state of the automaton $G_{norm}$, that is, $x_0 := UR(q_0) = \delta_n(q_0,\Sigma_{uo}^*)$. The transition function $\delta_{obs}$ is defined for every $X\in X_{obs}$ and every observable event $a\in \Sigma_o$ as the unobservable reach of the states reachable in $G_{norm}$ from the states of $X$ by the event $a$, that is, $$\delta_{obs}(X,a) := UR(\delta_n(X,a))$$ where, for every $Y\subseteq Q_n$, $UR(Y)=\delta_n(Y,\Sigma_{uo}^*)$.
    By item (2) of Lemma~\ref{lemma:delta_norm}, $$\delta_n(X,a)\subseteq Q\,,$$ and hence every state of the observer of $G_{norm}$ is uniquely determined by a subset of $Q$. 
    In particular, we define an injective mapping $f\colon X_{obs}\to 2^Q$ assigning subsets of $Q$ to the states of the observer of $G_{norm}$ as follows: $f(x_0)=\{q_0\}$, and for every state $Y\neq x_0$ of the observer of $G_{norm}$, we pick and fix a state $X\in X_{obs}$ such that $\delta_{obs}(X,a)=Y$, for some observable event $a\in\Sigma_o$, and we define $f(Y)=\delta_n(X,a)$. Such a state $X$ exists because every state of the observer is reachable. Then, $Y=\delta_{obs}(X,a) = UR(\delta_n(X,a)) = UR(f(Y))$, and we have that if $f(Y_1)=f(Y_2)$, then $Y_1 = UR(f(Y_1)) = UR(f(Y_2)) = Y_2$, which shows that the mapping $f$ is injective.
    Consequently, the number of states of the observer of $G_{norm}$ is bounded by the number of subsets of the set $Q$, which is $2^n$.
  \end{proof}

  Notice that Lemma~\ref{lemma1} does not claim that the number of states of the observer of $G$ and of the observer of its normalization $G_{norm}$ coincide. It only provides an upper bound on the worst-case complexity.
  
  Similarly, for a normal deterministic DES $G$ with $n$ states, Construction~\ref{consSSOtoSO} results in a deterministic DES $G'$ with up to $2n$ states. The second lemma shows that the observer of $G'$ has as many states as the observer of $G$.
  
  \begin{lem}\label{lemma2}
    Let $G$ be a normal deterministic DES with $n$ states, and let $G'$ be obtained from $G$ by Construction~\ref{consSSOtoSO}. Then, the numbers of states of the observer of $G'$ and of the observer of $G$ coincide. 
  \end{lem}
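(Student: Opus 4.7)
The plan is to exhibit an explicit bijection between the reachable observer states of $G$ and those of $G'$. The structural fact underlying the construction is that the only unobservable transitions touching the new primed copies are (a) the fresh $u$-edges from each non-secret $q\in Q_{NS}$ to its copy $q'$, and (b) the unobservable transitions inside $G_{ns}$, which are exactly the unobservable $G$-transitions between non-secret states. In particular, no unobservable transition leads from $Q_{NS}'$ back to $Q$. With this in mind, define
\[
  \phi(X) \;=\; X \;\cup\; \{q' \mid q \in X \cap Q_{NS}\} \;\subseteq\; Q \cup Q_{NS}'.
\]
Since $\phi(X) \cap Q = X$, the map $\phi$ is injective on arbitrary subsets of $Q$, so it suffices to show that it sends the reachable observer states of $G$ onto the reachable observer states of $G'$.

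The key claim I would prove, by induction on $|w|$, is that for every observable string $w\in\Sigma_o^*$ the observer state of $G'$ reached from its initial observer state by $w$ equals $\phi(X_w)$, where $X_w$ is the observer state of $G$ reached by $w$. For $w=\eps$, the only way to reach a state of $Q_{NS}'$ from $q_0$ via unobservable events is to first walk inside $Q$ to some non-secret state $p$, take the $u$-edge to $p'$, and then follow $G_{ns}$-unobservable transitions. Combined with the fact that $X_\eps \cap Q_{NS}$, being a subset of the $G$-unobservable-reach-closed set $X_\eps$, is already closed under the more restrictive $G_{ns}$-unobservable transitions, this identifies the initial observer state of $G'$ with $\phi(X_\eps)$.

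For the inductive step, given $\phi(X_w)$ and an observable event $a$, one computes the next observer state of $G'$ by applying $a$ separately to the $Q$-part and to the $Q_{NS}'$-part of $\phi(X_w)$ and then closing under unobservable reach in $G'$. Using the three observations above, the intersection of the resulting set with $Q$ equals $Y:=X_{wa}$, and its intersection with $Q_{NS}'$ equals $\{q' \mid q\in Y\cap Q_{NS}\}$, so the whole set is $\phi(Y)$. The main obstacle is precisely this last bookkeeping: one must check that the $\delta_{ns}$-images of the $Q_{NS}'$-part of $\phi(X_w)$ are already absorbed into $\{q' \mid q \in Y\cap Q_{NS}\}$ and contribute no new primed elements, which follows because any such image $\delta_{ns}(r',a)=p'$ corresponds to a transition $\delta(r,a)=p$ in $G$ with $r\in X_w\cap Q_{NS}$ and $p\in Q_{NS}$, so $p\in Y\cap Q_{NS}$. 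The induction then shows that the reachable observer states of $G'$ are exactly the images under $\phi$ of the reachable observer states of $G$, and together with the injectivity of $\phi$ this yields that the two observers have the same number of states.
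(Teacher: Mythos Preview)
Your proposal is correct and takes essentially the same approach as the paper: both establish that every reachable observer state $S$ of $G'$ satisfies $p'\in S \Leftrightarrow p\in S$, i.e., $S=\phi(S\cap Q)$ in your notation, and hence that $\phi$ is a bijection between the reachable observer states of $G$ and of $G'$. The paper proves the biconditional directly by analyzing how a primed state can be reached, whereas you unfold the same argument as an induction on the length of the observable word; your version is slightly more explicit in justifying that $S\cap Q$ ranges exactly over the reachable observer states of $G$.
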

  \begin{proof}
    Let $G=(Q,\Sigma,\delta,q_0)$ be a normal deterministic DES, and let $G'=(Q\cup Q_{NS}',\Sigma\cup\{u\},\delta', q_0)$ be the DES obtained from $G$ by Construction~\ref{consSSOtoSO}. Recall that $G'$ is obtained as a disjoint union of $G$ and $G_{ns}$, where $G_{ns}$ is a copy of $G$ without the secret states and the corresponding transitions, $Q_{NS}'=\{q' \mid q\in Q_{NS}\}$ is a copy of $Q_{NS}$ disjoint from $Q$, and the event $u$ is unobservable.
    For every reachable state $S$ of the observer of $G'$, we show that $S$ contains a state $p'\in Q_{NS}'$ if and only if $S$ contains the corresponding state $p\in Q_{NS}$. Consequently, the observer of $G'$ and the observer of $G$ have the same number of states.
    
    To prove one direction, let $S$ be a reachable state of the observer of $G'$. If $S$ contains a state $p\in Q_{NS}$, then the unobservable transition $(p,u,p')$ of $G'$ implies that $S$ also contains the state $p'\in Q_{NS}'$.
    
    To prove the other direction, let $S$ be a reachable state of the observer of $G'$, and assume that a state $p'\in Q_{NS}'$ belongs to $S$. Then, for every string $w\in P'(L(G'))$ under which the state $S$ is reachable from the initial state $\{q_0\}$ in the observer of $G'$, there exists a string $w'\in L(G')$ such that $P'(w')=w$ and $\delta'(q_0,w')=p'$. Since $q_0\in Q$, $p'\in Q_{NS}'$, and every string of $L(G')$ contains at most one occurrence of the event $u$, we can partition the string $w'=w_1uw_2$ so that $\delta'(q_0,w_1)=r$, $\delta'(r,u)=r'$, and $\delta'(r',w_2)=p'$, for some state $r\in Q$. However, $\delta'(r',w_2)=p'$ is executed in $G_{ns}$, which is obtained from $G$ by removing all secret states. Therefore, $\delta(r,w_2)=p$ must be defined in $G$. Altogether, we have shown that $\delta(q_0,w_1w_2)=p$ is defined in $G$, and hence the string $w_1w_2\in L(G)$. Since $w=P'(w')=P'(w_1w_2)$, we have shown that $p\in S$. 
  \end{proof}

  We can now prove the following result analyzing the complexity of Algorithm~\ref{alg2}.
  \begin{thm}
    The space and time complexity of Algorithm~\ref{alg2} is $O(n2^{n})$ and $O((n+m)2^{n})$, respectively, where $n$ is the number of states of $G$ and $m$ is the number of transitions of $P(G)$, that is, $m \le \ell n^2$, where $\ell$ is the number of observable events.
  \end{thm}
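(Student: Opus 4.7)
The plan is to trace the three lines of Algorithm~\ref{alg2} and observe that all the exponential work happens in the last one. Lines~1 and~2 produce $G_{norm}$ and then $G'$ in time polynomial in $n$ and in the number of transitions of $G$, so they are dominated by the call to Algorithm~\ref{alg1} on $G'$. The subtle point is that $G'$ has up to $3n$ states---at most $2n$ arising from the normalization and a further $|Q_{NS}| \le n$ nonsecret copies added in Construction~\ref{consSSOtoSO}---so the naive bound $2^{|Q_{G'}|}$ on the size of its observer is only $2^{O(n)}$ rather than the desired $2^n$.

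The main step of the proof is to sharpen this observer bound to $2^n$ by invoking Lemmas~\ref{lemma1} and~\ref{lemma2}. Since $G_{norm}$ is normal, Lemma~\ref{lemma2} yields that the observer of $G'$ has exactly as many reachable states as the observer of $G_{norm}$, and Lemma~\ref{lemma1} bounds the latter by $2^n$. Thus, although $G'$ is built over $O(n)$ states, every subset-construction-based object processed by Algorithm~\ref{alg1} on $G'$ has at most $2^n$ states. This is the place where the whole complexity claim rests.

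With this in hand, the remainder is a direct rerun of the complexity breakdown already established for Algorithm~\ref{alg1}. Writing $n' = |Q_{G'}| = O(n)$ and $m' = |\delta_{P(G')}|$, the trivial worst-case estimate $m' \le \ell\, n'^2 = O(\ell n^2)$ together with $m \le \ell n^2$ gives $m' = O(m)$. Computing the observer of $G'$ and the relevant part $H$ of its full observer each cost $O(\ell\, 2^n)$; computing $P(G')$ costs $O(n' + m') = O(n+m)$; the loop building $Y$ costs $O(n'\, 2^n)$; the product $\C = P(G') \times H$ has $O(n'\, 2^n)$ states and $O(m'\, 2^n)$ transitions and is built in time $O((n+m)\, 2^n)$; and the BFS of Algorithm~\ref{bfsNew} runs in time linear in $|\C|$, storing only $O(\log k)$ additional bits for its distance counter. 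Summing gives time $O((n+m)\, 2^n)$ and, since the space is dominated by $\C$, space $O(n\, 2^n)$. The only mildly delicate point is the bound $m' = O(m)$, which I would justify by the worst-case comparison above rather than by any transition-by-transition argument.
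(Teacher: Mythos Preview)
Your overall strategy matches the paper's: invoke Lemmas~\ref{lemma1} and~\ref{lemma2} to bound the observer of $G'$ by $2^{n}$ reachable states, and then rerun the cost analysis of Algorithm~\ref{alg1}. Two points, however, need repair.

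First, Lemmas~\ref{lemma1} and~\ref{lemma2} control only the observer $G'^{obs}$; they say nothing about the auxiliary automaton $H$, which is a \emph{different} piece of the full observer of $G'$. Your sentence ``every subset-construction-based object processed by Algorithm~\ref{alg1} on $G'$ has at most $2^{n}$ states'' is therefore not supported. The paper closes this gap with a separate, structural observation: the second components of $Y$ are subsets of $Q_{NS}'$, and since the copy $G_{ns}$ is a closed sub-automaton of $G'$ (no transition leaves $Q_{NS}'$), every state of $H$ remains a subset of $Q_{NS}'$; as $|Q_{NS}'|\le n$, this gives $|H|\le 2^{n}$ directly.

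Second, your justification of $m' = O(m)$ is logically invalid. From $m' \le \ell\,n'^{2} = O(\ell n^{2})$ and $m \le \ell n^{2}$ one cannot deduce $m' = O(m)$: the two quantities merely share an upper bound, and $m$ may be far below $\ell n^{2}$. With only your ``worst-case comparison'' the time bound degrades to $O((n+\ell n^{2})\,2^{n})$, which is weaker than the stated $O((n+m)\,2^{n})$. To be fair, the paper also asserts that $\C$ has $O(m\,2^{n})$ transitions without spelling this out. A genuine argument is transition-by-transition: any path in $G_{norm}$ (respectively $G'$) witnessing a transition of $P(G_{norm})$ (respectively $P(G')$) projects, by erasing primes and the new event $u$, to a path in $G$ witnessing a transition of $P(G)$, and each state of $G$ has only boundedly many copies in $G'$; this yields $m' = O(m)$.
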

  \begin{proof}
    Let $G$ be an $n$-state deterministic DES. In the first step, we construct the normalization $G_{norm}$ of $G$ with at most $2n$ states, the observer of which has at most $2^n$ states by Lemma~\ref{lemma1}. 
    Then, we apply Algorithm~\ref{alg1} to $G'$ obtained from $G_{norm}$ by Construction~\ref{consSSOtoSO}. In particular, by Lemma~\ref{lemma2}, we compute the observer $G'^{obs}$ of $G'$ with at most $2^{n}$ states, and the projected automaton $P(G')$ with at most $4n$ states. Then, for every reachable state $X$ of $G'^{obs}$, and for every $x\in X\cap Q_S'$, we add the pair $(x,X \cap Q_{NS}')$ to the set $Y$. This computation takes time $O(n 2^{n})$. Afterwards, we construct the automaton $H$ as the part of the full observer of $G'$ that is accessible from the states of the second components of $Y$. Since $H$ consists only of the subsets of $Q_{NS}'$, of which there is at most $2^n$, the automaton $H$ has $O(2^n)$ states. The automaton $\C = P(G') \times H$ thus has $O(n2^n)$ states and $O(m 2^n)$ transitions, the sum of which is the time complexity of the BFS applied to mark states of $\C$ reachable from the states of $Y$ in at most $k$ steps. Therefore, the state complexity of Algorithm~\ref{alg2} is $O(n2^{n})$ and the time complexity is $O(n2^{n} + (n+m)2^{n})=O((n+m)2^{n})$.
  \end{proof}

  Comparing the complexity $O((n+m) 2^n)$ of Algorithm~\ref{alg2} with the complexity of the existing algorithms, the reader may see that (1) the complexity of Algorithm~\ref{alg2} does not depend on the parameter $k$, and (2) it is better than the complexity of the existing algorithms, because the minimum of the worst-case complexities $O(\ell 2^{n^{k+1} \cdot 2^{k}})$, $O(\ell 2^{(k+2)n})$, $O(\ell (k+3)^n)$, and $O((n+m)(k+1)2^n)$ of the existing algorithms discussed at the beginning of this subsection is
  $O((n+m)2^n)$ for $k=1$, and 
  $O((n+m)(k+1)2^n) = O((n+m) 2^{2n})$ for $k\in O(2^n)$. 
  Notice that the minimum worst-case complexity for large $k$ is significantly higher than the complexity $O((n+m) 2^n)$ of Algorithm~\ref{alg2}. 
  In fact, the complexity of Algorithm~\ref{alg2}, and the minimum worst-case complexity of the existing algorithms for very small $k$, coincide. However, while the existing algorithms can handle only inputs with a very small $k$ with this complexity, our algorithm can handle inputs with $k$ of arbitrary value with this complexity.
  Consequently, our algorithm improves the complexity of the verification of strong $k$-step opacity.

\section{Conclusions}
  We investigated and discussed the relationship between the notions of weak and strong $k$\nobreakdash-step opacity.
  We designed an algorithm verifying weak $k$\nobreakdash-step opacity that, compared with the existing algorithms, does not depend on the parameter $k$, and has a lower worst-case complexity than the existing algorithms.
  We further discussed strong $k$-step opacity and transformed it to weak $k$\nobreakdash-step opacity in linear time, obtaining thus an algorithm to verify strong $k$\nobreakdash-step opacity. Again, this algorithm does not depend on the parameter $k$, and has a lower worst-case complexity than the existing algorithms.

  Finally, we point out that the complexity of our algorithms may be further optimized. For instance, rather than eliminating $\eps$-transitions in $P(G)$, we may suitably redefine the product of two NFAs for NFAs with $\eps$-transitions. In this case, the parameter $m$ in the complexity formula would be bounded by $\ell n$ rather than $\ell n^2$. We leave these and further optimizations for the future work.

\begin{ack}
  We gratefully acknowledge suggestions and comments of the anonymous referees.
\end{ack}

\appendix
\section{Relation between Strong and Weak $k$-Step Opacity}\label{ma}
  As already mentioned in the introduction, \cite{ma2021} pointed out that strong $k$-step opacity and weak $k$-step opacity are incomparable in the sense that neither strong $k$-step opacity implies weak $k$-step opacity nor the other way round.

  We now show that under the assumption that the set of states is partitioned into secret and nonsecret states, disregarding neutral states, which is a common assumption in the literature~\citep{SabooriH11,Yin2017}, including this paper, the two notions are comparable. This fact was also recently pointed out by \cite{Wintenberg2021}.

  \begin{thm}
    Let $G=(Q,\Sigma,\delta,i)$ be a deterministic DES, and let $k\in\mathbb{N}_\infty$. If $G$ is $k$-SSO with respect to $Q_S$ and $P$, then $G$ is $k$-SO with respect to $Q_S$, $Q-Q_{S}$, and $P$.
  \end{thm}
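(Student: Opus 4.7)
My plan is to directly unfold the two definitions and show that any witness for a failure of $k$-SO would contradict $k$-SSO. The direction is contrapositive-friendly but actually a direct argument is cleaner here, so I would proceed that way.

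First, I would take an arbitrary $st \in L(G)$ satisfying the antecedent of $k$-SO, namely $|P(t)| \le k$ and $\delta(\delta(q_0,s)\cap Q_S, t) \neq \emptyset$. Because $G$ is deterministic with single initial state $q_0$, the nonemptiness of $\delta(q_0,s)\cap Q_S$ is equivalent to $\delta(q_0,s) \in Q_S$, and the nonemptiness of $\delta(\delta(q_0,s), t)$ just means $st \in L(G)$. So I would then apply the $k$-SSO hypothesis to this string $st$, obtaining a string $w \in L(G)$ with $P(w) = P(st)$ such that, for every prefix $w'$ of $w$ with $|P(w)|-|P(w')| \le k$, we have $\delta(q_0, w') \notin Q_S$.

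The next step is to decompose $w = s't'$ in a way that matches $s$ and $t$ under projection. Since $P(w) = P(s)P(t)$, I can let $s'$ be the shortest prefix of $w$ satisfying $P(s') = P(s)$, and set $t'$ to be the corresponding suffix; then $P(t') = P(t)$ automatically, and $s't' = w \in L(G)$. Now $|P(w)| - |P(s')| = |P(t')| = |P(t)| \le k$, so the $k$-SSO conclusion applied to the prefix $s'$ yields $\delta(q_0, s') \notin Q_S$, i.e., $\delta(q_0, s') \in Q - Q_S = Q_{NS}$. Combined with $s't' \in L(G)$, this gives $\delta(\delta(q_0,s') \cap Q_{NS}, t') \neq \emptyset$, exactly the witness required for $k$-SO.

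I do not expect any real obstacle here: the definitional relationship is tight once one observes that $Q_{NS} = Q - Q_S$ (no neutral states) converts the $k$-SSO guarantee ``$\delta(q_0,w') \notin Q_S$'' directly into the $k$-SO guarantee ``$\delta(q_0,s') \in Q_{NS}$''. The only minor bookkeeping is the decomposition of $w$ and checking that $|P(t')| = |P(t)|$ puts $s'$ inside the $k$-window of $w$, which is immediate from $|P(t)| \le k$.
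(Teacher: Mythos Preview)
Your proposal is correct and follows essentially the same approach as the paper's proof: both apply the $k$-SSO hypothesis to the string $st$, decompose the resulting witness $w$ as $s't'$ with $P(s')=P(s)$ and $P(t')=P(t)$, and observe that $|P(w)|-|P(s')|=|P(t)|\le k$ forces $\delta(q_0,s')\notin Q_S$, hence $\delta(q_0,s')\in Q_{NS}=Q-Q_S$. The only cosmetic difference is that you specify $s'$ as the shortest such prefix, whereas the paper leaves the choice of decomposition implicit; either works.
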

  \begin{proof}
    Assume that $G$ is $k$-SSO with respect to $Q_S$ and $P$. We show that $G$ is $k$-SO with respect to $Q_S$, $Q_{NS}=Q-Q_S$, and $P$. To this end, we consider any string $st \in L(G)$ with $|P(t)| \leq k$ such that $\delta(i,s)=q_s \in Q_S$; since $st\in L(G)$, we have $\delta(q_s, t)$ is defined. Since $G$ is $k$-SSO, there is a string $w \in L(G)$ such that $P(st)=P(w)$ and for every prefix $w'$ of $w$, if $|P(w)|-|P(w')| \le k$, then $\delta(i,w')\notin Q_S$. Because $P(st)=P(w)$, the string $w$ can be written as $w=s't'$, where $P(s)=P(s')$ and $P(t)=P(t')$. Since $s'$ is a prefix of $w=s't'$, and $|P(s't')|-|P(s')| = |P(t')| = |P(t)| \le k$, we have $\delta(i,s')\notin Q_S$, that is, $\delta(i,s')\in Q_{NS}$. But then the string $s't' \in L(G)$ is such that $P(s)=P(s')$, $P(t)=P(t')$, $\delta(i,s') = q_{ns} \in Q_{NS}$, and $\delta(q_{ns}, t')$ is defined, because $s't'=w\in L(G)$; therefore, $G$ is $k$-SO with respect to $Q_S$, $Q-Q_S$, and $P$.
  \end{proof}

  It can be seen in the previous proof that if there are neutral states, then the implication does not hold in general.

\bibliographystyle{elsarticle-harv}
\bibliography{mybib}

\end{document}